\theoremstyle{thmstyleone}%
\newtheorem{theorem}{Theorem}
\newtheorem{lemma}[theorem]{Lemma}
\newtheorem{definition}[theorem]{Definition}
\newcommand\norm[1]{\lVert#1\rVert}
\newcommand{\Id}{\text{Id}}
\newcommand{\R}{\mathbb{R}}
\newcommand{\N}{\mathbb{N}}
\newcommand{\Z}{\mathbb{Z}}
\newcommand{\C}{\mathbb{C}}
\newcommand{\lset}[1]{\left\{ #1 \right\}}
\newcommand{\End}[1]{\text{End}(#1)}
\newcommand{\B}{\mathcal{B}}
\newcommand{\BR}{\mathcal{B}^{\otimes R}}
\newcommand{\U}{\mathcal{U}}
\newcommand{\btheta}{\boldsymbol \theta}
\newcommand{\bomega}{\boldsymbol \omega}
\newcommand{\bOmega}{\boldsymbol \Omega}
\newcommand{\blambda}{\boldsymbol \lambda}
\newcommand{\bx}{\boldsymbol x}
\newcommand{\by}{\boldsymbol y}
\newcommand{\bj}{\boldsymbol j}
\newcommand{\bi}{\boldsymbol i}
\newcommand{\bk}{\boldsymbol k}
\newcommand{\diag}{\text{diag}}
\newcommand{\ord}{\text{ord}}
\newcommand{\cmark}{\ding{51}}
\newcommand{\xmark}{\ding{55}}
\begin{document}

\title[Spectral Invariance and Maximality Properties of the Frequency Spectrum of Quantum Neural Networks]{Spectral Invariance and Maximality Properties of the Frequency Spectrum of Quantum Neural Networks}


\author*[1]{\fnm{Patrick} \sur{Holzer}}\email{patrick.holzer@itwm.fraunhofer.de}
\equalcont{These authors contributed equally to this work.}

\author*[1]{\fnm{Ivica} \sur{Turkalj}}\email{ivica.turkalj@itwm.fraunhofer.de}
\equalcont{These authors contributed equally to this work.}

\affil*[1]{\orgdiv{Analytics and Computing}, \orgname{Fraunhofer Institute for Industrial Mathematics ITWM}, \orgaddress{\street{Fraunhofer-Platz 1}, \city{Kaiserslautern}, \postcode{67663}, \country{Germany}}}


\abstract{Quantum Neural Networks (QNNs) are a popular approach in Quantum Machine Learning.
We analyze the
frequency spectrum using Minkowski sums, which yields a compact algebraic description and permits explicit computation. Using this description, we prove several maximality results for broad classes of QNN architectures.
Under some mild technical conditions we establish a bijection between classes of models with the same area $A:=R\cdot L$ that preserves the frequency spectrum, where $R$ denotes the number of qubits and $L$ the number of layers, which we consequently call spectral invariance under area-preserving transformations.
With this we explain the symmetry in $R$ and $L$ in the results often observed in the literature and show that the maximal frequency spectrum depends only on the area $A=RL$ and not on the individual values of $R$ and $L$. 
Moreover, we collect and extend existing results and specify the maximum possible frequency spectrum of a QNN with an arbitrary number of layers as a function of the spectrum of its generators. In the case of arbitrary dimensional generators, where our two introduced notions of maximality differ, we extend existing Golomb ruler based results and introduce a second novel approach based on a variation of the turnpike problem, which we call the relaxed turnpike problem. 
We clarify comprehensively how the generators of a QNN must
be chosen in order to obtain a maximal frequency spectrum for a given area $A$, thereby contributing to a deeper theoretical understanding.
However, our numerical experiments show that trainability depends not only on $A = RL$, but also on the choice of $(R,L)$,
so that knowledge of the maximum frequency spectrum alone is not sufficient to ensure good trainability.
While the maximality analysis limits the space of attainable frequency spectra and characterizing these spectra helps to understand the effect of the ansatz choice, it does not by itself provide sufficient criteria for ansatz selection, and further considerations are required.
}

\keywords{Quantum Computing, Quantum Neural Networks, Frequency Spectrum, Spectral Invariance}



\maketitle

\section{Introduction}
\label{sec:chapter_intro}
A frequently studied approach to Quantum Machine Learning (QML), a field combining quantum computing and classical Machine Learning (ML), is based on Variational Quantum Algorithms (VQAs) \cite{schuld2021machine, Cerezo2020VariationalQA, Cerezo2022ChallengesAO}. 
VQAs are hybrid algorithms that use parameterised quantum circuits (PQCs) to build a target function. PQCs are quantum circuits $U(\btheta)$ dependent on some real parameters $\btheta \in \R^P$. While the evaluation of the target function is done on quantum hardware, the training or optimisation of the parameters $\btheta$ is done on classical hardware.
VQAs are considered as a promising candidate for practical applications on Noisy Intermediate-Scale Quantum (NISQ) computers \cite{Preskill2018QuantumCI, bharti2022noisy, Cerezo2020VariationalQA}.
Different types of architectures or ans\"atze for PQCs have been proposed, like the Quantum Alternating Operator Ansatz (QAOA) \cite{Hadfield2017FromTQ}, Variational Quantum Eigensolver (VQE) \cite{Tilly2021TheVQ, Kandala2017HardwareefficientVQ}, Quantum Neurons \cite{Kordzanganeh2022AnEF} and Quantum Neural Networks (QNNs), which appear under different names in the literature \cite{Schuld2020EffectOD, McClean2015TheTO, Romero2019VariationalQG, Mitarai2018QuantumCL, Farhi2018ClassificationWQ, McClean2018BarrenPI, Benedetti2019ParameterizedQC}. There are also variations and extension of the QNN ansatz like Dissipative Quantum Neural Networks (dQNNs) \cite{Beer2019TrainingDQ, Beer2021TrainingQN, Heimann2022LearningCO} or Hybrid Classical-Quantum Neural Networks (HQNNs) \cite{Vidal2019InputRF, Mitarai2018QuantumCL, Kashif2022DemonstrQA}. More hardware near approaches \cite{Gan2021FockSE} or higher dimensional qubits (qudits) are also investigated \cite{Casas2023MultidimensionalFS}. Additionally, there is no standardized definition in the literature for a QNN, similar to classical neural networks. Therefore, there are slight variations in the approaches used. The application of PQCs and QNNs in finance \cite{Wolf2023QAS, Ors2018QuantumCF, herman2022survey}, medicine \cite{Maheshwari2022QuantumML, Umer2022ACS}, chemistry \cite{Xia2022PotentialAO} and other domains has also been investigated.

Following \cite{Schuld2020EffectOD, Shin2022ExponentialDE, Kordzanganeh2022AnEF}, this work focuses solely on pure QNNs. 
A QNN is a function of the form 
\begin{align}
    f_{\btheta}(\bx) = \langle 0| U^\dagger(\bx,\btheta) M U(\bx,\btheta)|0\rangle
\end{align}
for some observable $M$, $\bx \in \R^N$ and parameter $\btheta\in \R^P$, where the parameterised quantum circuit $U(\bx,\btheta)$ consists of embedding layers of a certain form. The dimension of the underlying Hilbert space is denoted by $d:=2^R$, where the number of qubits is denoted by $R$. If $N=1$, then $x\in \R$ and $f:\R\rightarrow \R$ is a real valued \textbf{univariate function}. In this case, each layer consists of a data encoding circuit of the form $S_l(x)=e^{-ix H_l}$ for some Hamiltonian $H_l$ called \textbf{generator}, and a trainable parameter encoding circuit $W_{\btheta}^{(l)}$. Concretely, the parameterised quantum circuit takes the form
\begin{align}
    U(x,\btheta) = W^{(L+1)}_{\btheta}\underbrace{S_L(x)W^{(L)}_{\btheta}}_{\text{Layer } L}\cdots W^{(2)}_{\btheta} \underbrace{S_1(x)W^{(1)}_{\btheta}}_{\text{Layer } 1}
\end{align}
The reuse of the input $x$ is called \textbf{data re-uploading} or \textbf{input redundancy} and has been shown to be necessary to increase the expressiveness of the model \cite{PerezSalinas2019DataRF, Kordzanganeh2022AnEF, Vidal2019InputRF}.

The univariate model can be extended to a \textbf{multivariate model} with $\bx\in \R^N$ for $N>1$ in multiple ways, we present two typically used approaches called \textbf{sequential} and \textbf{parallel ansatz}.
In the sequential ansatz, there is a parametrized quantum circuit $U_n(x_n, \btheta)$ for each univariate variable $x_n$. The full parameterised circuit is then defined via concatenation
\begin{align}
    U(\bx, \btheta) := \prod_{n=1}^N U_n(x_n, \btheta).
\end{align}
In the parallel ansatz, the data encoding circuits $S_{n, l}(x_n)$ for each univariate variable $x_n$ and each layer $l$ are glued together in parallel via tensor product
\begin{align}
\label{eq: parallel ansatz}
     S_l(\bx) := \bigotimes_{n=1}^N S_{l, n}(x_n).
\end{align}
The circuits $ U(\bx, \btheta)$ of the sequential and parallel ansatz are shown in Figure \ref{fig:qnn circuits parallel and sequential}.
\begin{figure}
    \centering
    \subfloat[\centering Parallel ansatz]{\resizebox{0.4\linewidth}{!}{
        \begin{tikzpicture}[scale = 0.25, line width = 0.5]
    \definecolor{colorblue}{HTML}{007AFF}
    \definecolor{colorred}{HTML}{D20000}
    \definecolor{colorgreen}{HTML}{7EC636}
   
    \def \roundness{3};
    \def \roundnesssubgenerators{1};
    \def \widthunitary{2};
    \def \widthsubgenerator{1.3};
    \def \heightS{8};
    \def \distunitariesx{1};
    \def \distunitariesy{4};
    \def \heightW{\heightS + \distunitariesy + \heightS};

    \def \yupper{\heightS + \distunitariesy};

    \def \xtwo{2 *\widthunitary + 2 * \distunitariesx};
    \def \xthree{\xtwo + 3*\widthunitary + \distunitariesx};
    \def \xfour{\xthree + 2 *\widthunitary + 2 * \distunitariesx};

    \coordinate (origin) at (0, 0);
    \coordinate (S) at (\widthunitary, \heightS);
    \coordinate (W) at (\widthunitary, \heightW);
    \coordinate (subgenerator) at (\widthsubgenerator, \widthsubgenerator);
    
    \newcommand{\figureline}[1]{
    \draw (-2, #1) -- (23, #1);
    \node[left] at (-2, #1) {$|0\rangle$};
    }

    \newcommand{\figureW}[2]{
        \draw[rounded corners = \roundness, fill = colorred] (#1, #2) rectangle ++(W);
    }

    \newcommand{\figureS}[2]{
        \draw[rounded corners = \roundness, fill = colorblue] (#1, #2) rectangle ++(S);
        \draw[rounded corners = \roundnesssubgenerators, fill = white] (#1 + 0.5*\widthunitary-0.5*\widthsubgenerator, #2 + 0.5) rectangle ++(subgenerator);
        \draw[fill = white] (#1 + 0.5*\widthunitary-0.5*\widthsubgenerator, #2 + 6.2) rectangle ++(subgenerator);
        \draw[fill = white] (#1 + 0.5*\widthunitary-0.5*\widthsubgenerator, #2 + 4) rectangle ++(subgenerator);
        \draw[dotted] (#1 + 0.5 * \widthunitary, #2 + 2.5) -- ++(0, 1);
    }
    \newcommand{\figureLayerS}[2]{
        \figureS{#1}{#2}
        \figureS{#1}{#2 + \yupper} 
        \draw[dotted] (#1 + 0.5 * \widthunitary, #2 + \heightS + 1) -- ++(0, \distunitariesy-2);
    }

     \newcommand{\figureLayer}[2]{
        \figureW{#1}{#2}
        \figureLayerS{#1 + \widthunitary + \distunitariesx}{#2}
        \draw[] (#1 - 0.5 * \distunitariesx, #2 - 0.5 * \distunitariesx) rectangle ++(2*\widthunitary + 2 * \distunitariesx, \heightS + \distunitariesx);
        \draw[] (#1 - 0.5 * \distunitariesx, #2 + \heightS + \distunitariesy - 0.5 * \distunitariesx) rectangle ++(2*\widthunitary + 2 * \distunitariesx, \heightS + \distunitariesx);
    }

    \newcommand{\figureGreenUnder}[5]{
        \draw[color = colorgreen] (#1, #2) -- ++(#3, 0);
        \draw[color = colorgreen] (#1, #2) -- ++(0, #4);
        \draw[color = colorgreen] (#1 + #3, #2) -- ++(0, #4);
        \node[below, color = colorgreen] at (#1 + 0.5 * #3, #2) {#5};
    }

    \newcommand{\figureGreenLeft}[5]{
        \draw[color = colorgreen] (#1, #2) -- ++(0, #3);
        \draw[color = colorgreen] (#1, #2) -- ++(#4, 0);
        \draw[color = colorgreen] (#1, #2 + #3) -- ++(#4, 0);
        \node[left, color = colorgreen] at (#1 , #2 + 0.5 * #3) {#5};
    }
  

  \figureline{1}
  \figureline{4.5}
  \figureline{6.7}
  \figureline{1 + \distunitariesy + \heightS}
  \figureline{4.5+ \distunitariesy + \heightS}
  \figureline{6.7+ \distunitariesy + \heightS}
  
  \figureLayer{0}{0}
  \figureW{\xtwo}{0}
  \figureLayer{\xthree}{0}
  \figureW{\xfour}{0}

  \draw[dotted] (\xtwo + \widthunitary + \distunitariesx, \heightS + 0.5 * \distunitariesy) -- (\xthree - \distunitariesx,  \heightS + 0.5 * \distunitariesy);
  \draw[dotted] (-3, \heightS + 1) -- ++(0, \distunitariesy-2);

  \draw[dotted] (6, -2) -- ++(6, 0);

  \node[below] at (2.5, -0.75) {Layer 1};
  \node[below] at (15.5, -0.75) {Layer L};

  \node[above, color = colorred] at (0.5 * \widthunitary - 0.5, \heightS * 2 + \distunitariesy + 0.5) {$W_{\theta}^{(1)}$};
  \node[above, color = colorred] at (\xtwo + 0.5 * \widthunitary + 0.5, \heightS * 2 + \distunitariesy + 0.5) {$W_{\theta}^{(2)}$};
  \node[above, color = colorred] at (\xthree + 0.5 * \widthunitary - 0.5, \heightS * 2 + \distunitariesy + 0.5) {$W_{\theta}^{(L)}$};
  \node[above, color = colorred] at (\xfour + 0.5 * \widthunitary + 1.5, \heightS * 2 + \distunitariesy + 0.5) {$W_{\theta}^{(L+1)}$};

  \node[above, color = colorblue] at (1.5 * \widthunitary + \distunitariesx, \heightS * 2 + \distunitariesy + 0.5) {$S_1(\mathbf{x})$};
  \node[above, color = colorblue] at (\xthree + 1.5 * \widthunitary + \distunitariesx, \heightS * 2 + \distunitariesy + 0.5) {$S_L(\mathbf{x})$};
  
  \figureGreenUnder{-1}{-3}{20}{1}{$L$}
  \figureGreenLeft{-5}{-0.2}{8}{1}{$R$}
  \figureGreenLeft{-5}{\heightS + \distunitariesy-0.2}{8}{1}{$R$}
  \figureGreenLeft{-7}{-0.2}{20}{1}{$N \cdot R$}
\end{tikzpicture}
    }
    }
    ~ 
    \subfloat[\centering Sequential ansatz]{\resizebox{0.55\linewidth}{!}{
       \begin{tikzpicture}[scale = 0.25, line width = 0.5]
    \definecolor{colorblue}{HTML}{007AFF}
    \definecolor{colorred}{HTML}{D20000}
    \definecolor{colorgreen}{HTML}{7EC636}
   
    \def \roundness{3};
    \def \roundnesssubgenerators{1};
    \def \widthunitary{2};
    \def \widthsubgenerator{1.3};
    \def \heightS{8};
    \def \distunitariesx{1};
    \def \distunitariesy{4};

    \def \xtwo{2 *\widthunitary + 2 * \distunitariesx};
    \def \xthree{8};
    \def \xfour{\xthree + 2 *\widthunitary + 2 * \distunitariesx};

    \coordinate (origin) at (0, 0);
    \coordinate (S) at (\widthunitary, \heightS);
    \coordinate (subgenerator) at (\widthsubgenerator, \widthsubgenerator);
    
    \newcommand{\figureline}[1]{
    \draw (-1, #1) -- (33.5, #1);
    \node[left] at (-1, #1) {$|0\rangle$};
    }

    \newcommand{\figureW}[2]{
        \draw[rounded corners = \roundness, fill = colorred] (#1, #2) rectangle ++(S);
    }

    \newcommand{\figureS}[2]{
        \draw[rounded corners = \roundness, fill = colorblue] (#1, #2) rectangle ++(S);
        \draw[rounded corners = \roundnesssubgenerators, fill = white] (#1 + 0.5*\widthunitary-0.5*\widthsubgenerator, #2 + 0.5) rectangle ++(subgenerator);
        \draw[fill = white] (#1 + 0.5*\widthunitary-0.5*\widthsubgenerator, #2 + 6.2) rectangle ++(subgenerator);
        \draw[fill = white] (#1 + 0.5*\widthunitary-0.5*\widthsubgenerator, #2 + 4) rectangle ++(subgenerator);
        \draw[dotted] (#1 + 0.5 * \widthunitary, #2 + 2.5) -- ++(0, 1);
    }

     \newcommand{\figureLayer}[2]{
        \figureW{#1}{#2}
        \figureS{#1 + \widthunitary + \distunitariesx}{#2}
        \draw[] (#1 - 0.5 * \distunitariesx, #2 - 0.5 * \distunitariesx) rectangle ++(2*\widthunitary + 2 * \distunitariesx, \heightS + \distunitariesx);
    }
    \newcommand{\figureModel}[2]{
        \figureLayer{#1}{#2}
        \figureLayer{#1 + 8}{#2}
        \draw[dotted] (#1 + 2 * \widthunitary + \distunitariesx + 1, #2 + 0.35 * \heightS) -- ++(1.2, 0);
    }

    \newcommand{\figureGreenUnder}[5]{
        \draw[color = colorgreen] (#1, #2) -- ++(#3, 0);
        \draw[color = colorgreen] (#1, #2) -- ++(0, #4);
        \draw[color = colorgreen] (#1 + #3, #2) -- ++(0, #4);
        \node[below, color = colorgreen] at (#1 + 0.5 * #3, #2) {#5};
    }

    \newcommand{\figureGreenLeft}[5]{
        \draw[color = colorgreen] (#1, #2) -- ++(0, #3);
        \draw[color = colorgreen] (#1, #2) -- ++(#4, 0);
        \draw[color = colorgreen] (#1, #2 + #3) -- ++(#4, 0);
        \node[left, color = colorgreen] at (#1 , #2 + 0.5 * #3) {#5};
    }
  

  \figureline{1}
  \figureline{4.5}
  \figureline{6.7}
  
  \figureModel{0}{0}
  \figureModel{16}{0}
  \figureW{30.5}{0}

  \node[below] at (2.5, -0.75) {Layer 1};
  \node[below] at (10.5, -0.75) {Layer L};
  \node[below] at (2.5 + 16, -0.75) {Layer 1};
  \node[below] at (10.5 + 16, -0.75) {Layer L};

  \node[above, color = colorred] at (0.5 * \widthunitary - 0.5, \heightS + 1.5) {$W_{\theta}^{(1, 1)}$};
  \node[above, color = colorred] at (9, \heightS + 1.5) {$W_{\theta}^{(L, 1)}$};
  \node[above, color = colorred] at (0.5 * \widthunitary - 0.5+16, \heightS + 1.5) {$W_{\theta}^{(1, N)}$};
  \node[above, color = colorred] at (9+16, \heightS + 1.5) {$W_{\theta}^{(L, N)}$};
  \node[above, color = colorred] at (8+16+10, \heightS + 1.5) {$W_{\theta}^{(L+1, N)}$};

  \node[above, color = colorblue] at (1.5 * \widthunitary + \distunitariesx, \heightS + 0.5) {$S_{1,1}(x_1)$};
  \node[above, color = colorblue] at (12, \heightS + 0.5) {$S_{L, 1}(x_1)$};

  \node[above, color = colorblue] at (1.5 * \widthunitary + \distunitariesx + 16, \heightS + 0.5) {$S_{1, N}(x_N)$};
  \node[above, color = colorblue] at (12 + 16, \heightS + 0.5) {$S_{L, N}(x_N)$};

  \figureGreenLeft{-4}{-0.2}{8}{1}{$R$}
  \figureGreenUnder{-1}{-3}{15}{1}{$L$}
  \figureGreenUnder{-1 + 16}{-3}{15}{1}{$L$}
  \figureGreenUnder{-1}{-5}{31}{1}{$N \cdot L$}

   \draw[dotted] (14, 2.8) -- ++(1.2, 0);
   
\end{tikzpicture}
    }
    }
    \caption{Circuits of the parallel and the sequential ansatz. $R$ is the number of qubits per variable, $L$ the number of layers per variable $x_n$ and $N$ is the dimension of $\bx \in \R^N$. Note that in total the parallel ansatz needs $N \cdot R$ many qubits and $L$ layers in total, while the model for the sequential ansatz has $R$ qubits and $N\cdot L$ layers in total. The parameter encoding layers are coloured red, the data encoding layers are coloured blue. The data encoding layers have the form $S(x) = e^{-ixH}$ for some Hamiltonian $H$ called generators. The generators are typically composed of smaller sub-generators. We have illustrated that all generators are composed of $2\times 2$ matrices and thus acting on a single qubit each. The sub-generators have been marked as white squares.}
    \label{fig:qnn circuits parallel and sequential}
\end{figure}

For univariate functions, the two cases collapse into a single ansatz.

The QNN is trained iteratively by evaluating $f_{\btheta}(\bx)$ on a dataset $\mathcal{D}$ on quantum hardware, computing some loss of the output and updating the parameters $\btheta$ by some classical algorithm. To update the parameters, one can use the parameter shift rule \cite{Schuld2018EvaluatingAG} to compute the gradient with respect to $\btheta$ efficiently by two further evaluations of the QNN and use it for gradient descent.

An immediate question to ask is what kind of functions are represented by the class of QNN.
In \cite{Schuld2020EffectOD} it was shown that every QNN can be represented by a finite Fourier series
\begin{align}
    f_{\btheta}(\bx) = \sum_{\bomega \in \bOmega} c_{\bomega}(\btheta) e^{i \bomega \cdot \bx},
\end{align}
where $\bomega \cdot \bx \in \R$ denotes the standard scalar product of $\bomega$ and $\bx$, and $\bOmega \subseteq \R^N$ is a finite set called the \textbf{frequency spectrum}. It can be shown that the frequency spectrum only depends on the data encoding Hamiltonians $H_{n, l}$ and is the same for the parallel and sequential ansatz, while the Fourier coefficients $c_{\bomega}(\btheta)$ also depend on the parameter encoding circuits $W_{\btheta}^{(l)}$ and the observable $M$. 
More precisely, let $\Delta X := X-X := \lset{\lambda_1-\lambda_2|\ \lambda_1,\lambda_2 \in X}$ and $\sum_{l=1}^L X_l:=\lset{\sum_{l=1}^L \lambda_l|\ (\lambda_1,...,\lambda_L) \in X_1 \times \cdots \times X_L}$ for all sets $X, X_1,...,X_L\subseteq \R^N$, and let $\sigma\left(H_{l}\right) \subseteq \R$ be the spectrum of $H_l$, then the frequency spectrum of the univariate model is given by
\begin{align}
    \Omega  = \Delta \sum_{l=1}^L \sigma\left(H_l\right).
\end{align}
The frequency spectrum of the multivariate model $\bOmega$, regardless whether it is constructed by the parallel or the sequential approach, is given by the Cartesian product  $\bOmega = \Omega_1\times \cdots \times \Omega_N$ of the frequency spectra $\Omega_n = \Delta\sum_{l=1}^L \sigma\left(H_{n, l}\right) $ of the associated univariate models. Thus, maximising $\bOmega$ can be reduced to maximising all univariate spectra $\Omega_n$.

In this work, we focus on the frequency spectrum and present various maximality results for different architectures and assumptions made on the QNN.
There are two natural notions of maximality in this context. One is to maximise the size $|\Omega|$ of the frequency spectrum, the other is to maximize the number $K \in \N$ such that
\begin{align}
    \Z_K:= \lset{-K,...,0,...,K} \subseteq \Omega
\end{align}
to ensure proper approximation properties. In some cases, as we will see, the answers to these two questions are related. A useful concept to study maximality is the degeneracy of the quantum model \cite{Peters2022GeneralizationDO, Kordzanganeh2022AnEF}. The \textbf{degeneracy} $\deg(\omega) \in \N$ of the frequency $\omega \in \Omega$ of a given QNN is defined as the number of representations
\begin{align}
    \omega = \sum_{l=1}^L \left(\lambda_{k_l}^{(l)}-\lambda_{j_l}^{(l)}\right),
\end{align}
where $\lambda_1^{(l)},...,\lambda_d^{(l)} \in \R$ denotes the eigenvalues of $H_l$. Maximizing the set $\Omega$ is equivalent to minimizing the degeneracies.

Schuld et al. \cite{Schuld2020EffectOD} have shown further that for any square integrable function there exists a QNN approximating that function with the given precision.
More precisely, they have shown that for any square-integrable function $g\in L_2\left([0,2\pi]^N\right)$ and all $\epsilon>0$ there exists an observable $M_{g, \epsilon}$ and a single layer QNN
\begin{align}
    f(\bx) = \langle 0| \left(W^{(1)}\right)^\dagger S^\dagger(\bx) \left(W^{(2)}\right)^\dagger M_{g, \epsilon}W^{(2)} S(\bx) W^{(1)} |0\rangle
\end{align}
 with a sufficiently large number of qubits $R \in \N$ such that $\norm{f-g}_2 <\epsilon$ under some mild constraints on the generators of the data encoding layers called \textbf{universal Hamiltonian property}. This property guarantees that the frequency spectrum is rich enough to approximate the target function.  For example, if all generators are constructed out of Pauli matrices, this condition is satisfied. This underlines the importance of studying the properties of the frequency spectrum of QNNs. The broader the frequency spectrum, the better the potential for approximating target functions, at least if there is enough degree of freedom in the parameter circuits $W^{(l)}$. 
It should be noted, however, that although this universality theorem undoubtedly answers an important question regarding the class of functions that can be approximated by QNNs, the dependence of the observables M on the Fourier coefficients of the approximated function g is a too strong restriction to be an analogue of the universal approximation property of classical neural networks \cite{Hornik1989MultilayerFN}. In practice, M is typically set to $M=Z\otimes \Id \otimes \cdots \otimes \Id$ or similar with sufficiently many copies of the identity. In theory, it could be that a function cannot be well approximated with this class of observables, no matter how many qubits are used.
So as far as we know, it is still an open question whether this class of observables can approximate any square integrable function with arbitrary precision, or, more generally, whether there exists a sequence of observables $\left(M_k\right)_{k \in \N}$ such that for all target functions $g$ and all $\epsilon > 0$ there exists an index $K$ such that for all $k\geq K$ there exist unitaries $W^{(1)}, W^{(2)}$ such that $\norm{g-f}_2 < \epsilon$, where $M_k$ is used as observable in $f$ and $W^{(1)}, W^{(2)}$ as parameter unitaries.

\subsection{Related Work and Our Contribution} 
In order to present the results of related work and our contribution on the frequency spectrum $\Omega$, we need to introduce some additional terms and notations regarding various assumptions that can be made on the QNN ansatz.
In \cite{Schuld2020EffectOD} it is assumed that the data encoding circuits are all equal in a QNN, i.e., $S_1(x)=...=S_L(x)$. We refer to this case as a \textbf{QNN with equal data encoding layers}. If we do not make this addition, then we do not make any further assumptions about the layers. Note that in a single-layer model both cases collapse into one.
Further, for practical and theoretical considerations, it is often assumed that the generators, and hence the data encoding circuits, are built up of smaller sub-generators, typically $S_l(x)=\otimes_{r=1}^R e^{-ixH_{r, l}} = e^{-ix \oplus_{r=1}^R H_{r, l}}$, where $H_{r, l} \in \End{\B}$ is Hermitian. 
This should not be confused with the formula for data encoding in the parallel ansatz. Here, the sub-generators are arranged in parallel to form the generator. This can be done in either the univariate or multivariate setting. In the parallel ansatz, generators are arranged in parallel to encode multiple variables, with one generator used for each univariate variable.

The sub-generators $H_{l, r}$ are often set to some multiple of the Pauli matrices $X, Y, Z$, where
\begin{align}
    X = \begin{pmatrix}
0 & 1 \\
1 & 0 
\end{pmatrix},
Y = \begin{pmatrix}
0 & -i \\
i & 0 
\end{pmatrix},
Z = \begin{pmatrix}
1 & 0 \\
0 & -1 
\end{pmatrix}.
\end{align}
If the QNN is only built by generators of dimension $k \times k$, we refer to the QNN as \textbf{QNN with $k$-dimensional sub-generators}. We write \textbf{QNN with Pauli sub-generators} when only multiples of Pauli matrices are used.
If not clear from the context, we name the sub-generators explicitly and write \textbf{QNN with $k$-dimensional sub-generators $H_{l, r}$}.

In \cite{Schuld2020EffectOD, Heimann2022LearningCO}, it was shown that the frequency spectrum of a single layer QNN with Pauli sub-generators $\nicefrac{Z}{2}$ is given by
\begin{align}
    \Omega = \Z_R = \lset{-R,...,0,...,R}.
\end{align}
In \cite{Peters2022GeneralizationDO}, this setting is called Hamming encoding.
If instead a single qubit QNN with Pauli sub-generators $\nicefrac{Z}{2}$ and equal data encoding layers is used, the frequency spectrum is
\begin{align}
    \Omega = \Z_L = \lset{-L,...,0,...,L}.
\end{align}
The result is symmetrical and linear in $R$ and $L$. 
While these results consider the special case of $L=1$ and $R=1$, we extend this result to arbitrary $L$, $R$ and $2$-dimensional sub-generators $H$. In this case, the frequency spectrum is given by
\begin{align}
    \Omega = (\lambda-\mu)\cdot \Z_{RL} = \lset{(\lambda-\mu) \cdot k|\ k \in \Z_{RL}},
\end{align}
where $\lambda, \mu \in \R$ are the two eigenvalues of $H$.
The previous results can be derived directly from this.

In \cite{Kordzanganeh2022AnEF}, two ans\"atze for an exponential encoding scheme were presented, leading both to the same frequency spectrum. In their so called \textbf{sequential exponential} ansatz, which is in our terminology a single qubit QNN with Pauli sub-generators, where each $H_l$ is set to $H_l= \beta_l \cdot \nicefrac{Z}{2}$ with 
\begin{align}
    \beta_l := \begin{cases}
        2^{l-1}, & \text{if } l<L \\
        2^{L-1}+1, & \text{if } l=L,
    \end{cases}
\end{align}
the resulting frequency spectrum is given by $\Omega = \Z_{2^L}$ and therefore exponential in $L$. The second ansatz suggested called \textbf{parallel exponential}, which is in our terminology a single layer QNN with Pauli sub-generators $H_r=\beta_r \cdot \nicefrac{Z}{2}$ with $\beta_r$ as before, only the variable names are interchanged. This ansatz leads to the frequency spectrum $\Omega = \Z_{2^R}$, which is exponential in $R$. Almost the same encoding strategy was presented in \cite{Peters2022GeneralizationDO} for the single layer QNN with Pauli encoding layers $H_r= \beta_r \cdot \nicefrac{Z}{2}$. The only difference here is that no exception is made for the last term, i.e. $\beta_l = 2^{l-1}$ for all $l=1,...,L$. The authors have named the approach \textbf{binary encoding strategy}.
The frequency spectrum for this ansatz is given by frequency spectrum $\Omega = \Z_{2^R-1}$, therefore, only the two terms at the boundary are omitted compared to the parallel exponential ansatz.

The same approach was chosen in \cite{Shin2022ExponentialDE, Peters2022GeneralizationDO} for single layer QNNs with Pauli sub-generators of the form $H_r=\beta_r \cdot \nicefrac{Z}{2}$, but here the generators were multiplied by powers of $3$. More precisely, $\beta_r = 3^{r-1}$. In \cite{Peters2022GeneralizationDO}, this ansatz is called \textbf{ternary encoding strategy}. The frequency spectrum in this case is given by $\Omega = \Z_{\frac{3^R-1}{2}}$. Unlike the previous approaches, this ansatz is maximal in both senses, meaning that there is no $\Omega'$ such that $|\Omega'| > |\Omega|$ and that there is no $K>\frac{3^R-1}{2}$ such that $\Z_K \subseteq \Omega'$ for a QNN with that ansatz. This is due to the fact that the degeneracy is $1$ for all frequencies $\omega \in \Omega\backslash \lset{0}$, hence each frequency is given by a unique combination of differences of the eigenvalues \cite{Peters2022GeneralizationDO}.
We extend these results to QNNs with arbitrary number of layers $L \geq 1$ and arbitrary $2$-dimensional Hermitian sub-generators $H_{l, r}$, both for the equal layer approach and the non equal layer approach. If the data encoding layers are equal, the maximal frequency spectrum in both senses is given by
\begin{align}
    \Omega_{\max} = \Z_{\frac{(2L+1)^R - 1}{2}},
\end{align}
if no restrictions to the layers are made, the maximal frequency spectrum is given by
\begin{align}
    \Omega_{\max} = \Z_{\frac{3^{RL} - 1}{2}}.
\end{align}
It is no coincidence that the results for the Hamming ansatz, the sequential and parallel ansatz, as well as the ternary encoding strategy with non equal data encoding layers are symmetrical in $L$ and $R$ and depend only on $L \cdot R$. 

We show that there exists a bijection 
\begin{align}
    \mathcal{B}_b : \lset{\text{QNN}_k|\ \text{QNN}_k \text{ has shape } (R, L)}
    \longrightarrow
    \lset{\text{QNN}_k|\ \text{QNN}_k \text{ has shape } (R', L')}
\end{align}
between the set of all QNNs with $R$ qubits and $L$ layers with $k$-dimensional sub-generators and the set of all QNNs with $R'$ qubits and $L'$ layers with $k$-dimensional sub-generators such that the frequency spectrum is invariant under that transformation, as long as $R\cdot L = R' \cdot L'$ holds.
We denote that observation as \textbf{spectral invariance under area-preserving transformations} and $A:= R \cdot L $ as the \textbf{area} of the QNN. The symmetries in the results mentioned above can be directly derived from this.
Additionally, the results for QNNs with any number of layers can be derived straight from the single-layer results and the spectral invariance under area-preserving transformations, if no assumptions such as equal data-coding layers on the structure are made.

The previous results only considered QNNs with 2-dimensional generators, i.e. Hermitian operators acting on a single qubit. In \cite{Peters2022GeneralizationDO, Kordzanganeh2022AnEF}, Peters et al. presented how to extend these results to single layer QNNs with a $d$-dimensional generator, which is nothing other than allowing an arbitrary data-encoding $H$. In this case, the frequency spectrum $\Omega$ is maximal in size with $|\Omega| = 2 \binom{d}{2} + 1$ if and only if the eigenvalues of $H$ are a so called \textbf{Golomb ruler}. A Golomb ruler is a set of real numbers $\lambda_1<...<\lambda_k$ such that each difference $\lambda_i - \lambda_j$ except $0$ only occur once. Since there are maximally $2 \binom{k}{2} + 1$ possible pairs, $\lambda_1<...<\lambda_k$ are a Golomb ruler if and only if
\begin{align}
    \big|\Delta \lset{\lambda_1,...,\lambda_k}\big| = 2 \binom{k}{2} + 1.
\end{align}
Hence, the frequency spectrum $\Omega$ is maximal in size because it is non-degenerate. 
We extend this result to arbitrary $k$-dimensional sub-generators and provide an instruction how to build a QNN whose frequency spectrum is maximal in size with
\begin{align}
    |\Omega_{\max}| = (4^q - 2^q + 1)^{\nicefrac{RL}{q}},
\end{align}
where $q=\log_2(k)$.

To maximize the number $K \in \N$ such that $\Z_K \subseteq \Omega$, we introduce a novel approach similar to the \textbf{turnpike problem}. The turnpike problem goes like this: given a multiset $\Delta S$, find $S$. 
We relax prerequisites of the problem and name it consequently \textbf{relaxed turnpike problem}. The relaxed turnpike problem asks for an $S$ of size $d$
such that $K(S):=\max \lset{K \in \N_0|\ \Z_K \subseteq \Delta S}$ is maximal over all sets of size $d$. We propose an algorithm with complexity $\mathcal{O}\left(d^{2d}\right)$ to solve the relaxed turnpike problem. With this we show that for $k=d$ the number $K \in \N$ such that $\Z_K \subseteq \Omega$ is maximal if and only if the eigenvalues of the generator $H$ are a solution of the relaxed turnpike problem. We further give a construction to extend this to arbitrary $k$ and $L$, yielding
\begin{align}
    \Z_{\frac{(2K+1)^{\nicefrac{RL}{q}}-1}{2}} \subseteq \Omega,
\end{align}
where $K:= K(\sigma(H))$ for some $k$-dimensional sub-generator $H$ whose eigenvalues are a solution of the relaxed turnpike problem and $q:= \log_2(k)$. However, in this case $K'=\frac{(2K+1)^{\nicefrac{RL}{q}}-1}{2}$ is not necessarily maximal.
For $k\leq 4$ and therefore especially for Pauli sub-generators, the Golomb approach and the turnpike approach lead to the same results and in particular reproduce the results for $k=2$.

We collect all mentioned results on the frequency spectrum and our contributions in Table \ref{tab:result overview}.

\begin{table}[ht]
\centering
\resizebox{\textwidth}{!}{%
\begin{tabular}{l c c c l c c c c l}
\toprule
Encoding strategy &
  R &
  L &
  $H$&
  $\beta_{r, l}$ &
  Equal &
  $\Omega$ &
  $|\Omega|$ &
  Maximal &
  Source \\ \toprule
Hamming &
  $1$ &
  $\N$ &
  $\nicefrac{P}{2}$ &
  $1$ &
  \cmark &
  $\Z_{L}$ &
  $2 L +1$ &
  \xmark &
  \cite{Schuld2020EffectOD, Heimann2022LearningCO} \\ \midrule
Hamming &
  $\N$ &
  $1$ &
  $\nicefrac{P}{2}$ &
  $1$ &
  - &
  $\Z_{R}$ &
  $2R +1$ &
  \xmark &
  \cite{Schuld2020EffectOD, Heimann2022LearningCO, Peters2022GeneralizationDO} \\ \midrule
Sequential exponential &
  $1$ &
  $\N$ &
  $\nicefrac{P}{2}$ &
  $1,2, 2^2,...,2^{L-1}+1$ &
  \xmark &
  $\Z_{2^L}$ &
  $2^{L+1}-1$ &
  \xmark &
  \cite{Kordzanganeh2022AnEF} \\ \midrule
Parallel exponential &
  $\N$ &
  $1$ &
  $\nicefrac{P}{2}$ &
  $1,2, 2^2,...,2^{R-1}+1$ &
  - &
  $\Z_{2^R}$ &
  $2^{R+1}+1$ &
  \xmark &
  \cite{Kordzanganeh2022AnEF} \\ \midrule
Binary &
  $\N$ &
  $1$ &
  $\nicefrac{P}{2}$ &
  $2^{r-1}$ &
  - &
  $\Z_{2^R-1}$ &
  $2^{R+1}-1$ &
  \xmark &
  \cite{Peters2022GeneralizationDO} \\ \midrule
Ternary &
  $\N$ &
  $1$ &
  $\nicefrac{P}{2}$ &
  $3^{r-1}$ &
  - &
  $\Z_{\frac{3^R-1}{2}}$ &
  $3^R$ &
  $|\Omega|, K$ &
  \cite{Shin2022ExponentialDE, Peters2022GeneralizationDO} \\ \midrule
Golomb &
  $\N$ &
  $1$ &
  $d$ &
  $1$ &
  - &
  varies &
  $2 \binom{d}{2}+1$ &
  $|\Omega|$ &
  \cite{Kordzanganeh2022AnEF, Peters2022GeneralizationDO} \\ \midrule
\textbf{Hamming} &
  $\N$ &
  $\N$ &
  $\nicefrac{P}{2}$ &
  $1$ &
  \cmark &
  $\Z_{R L}$ &
  $2 RL +1$ &
  \xmark &
  Our \\ \midrule
\textbf{Equal Layers} &
  $\N$ &
  $\N$ &
  $2$ &
  $(2L+1)^{l-1 + L \cdot(r-1)}$ &
  \cmark &
  $\Z_{\frac{(2L+1)^R-1}{2}}$ &
  $(2L+1)^R$ &
  $|\Omega|, K$ &
  Our \\ \midrule
\textbf{Ternary} &
  $\N$ &
  $\N$ &
  $2$ &
  $3^{l-1 + L(r-1)}$ &
  \xmark &
  $\Z_{\frac{3^{RL}-1}{2}}$ &
  $3^{RL}$ &
  $|\Omega|, K$ &
  Our \\ \midrule
\textbf{Golomb} &
  $\N$ &
  $\N$ &
  $k$ &
  $(2\ell(\sigma(H))+1)^{l-1 + L(r-1)}$ &
  \xmark &
  varies &
  $\left(4^q-2^q+1\right)^{\nicefrac{RL}{q}}$ &
  $|\Omega|$ &
  Our \\ \midrule
\textbf{Turnpike} &
  $\N$ &
  $\N$ &
  $k$ &
  $(2 K+1)^{l-1 + L(r-1)}$ &
  \xmark &
  varies &
  $\geq (2K+1)^{\nicefrac{RL}{q}}$ &
  $K$ / \xmark &
  Our \\ \bottomrule
\end{tabular}%
}
\caption{
Summary of the frequency spectra results for QNNs. Our contributions are highlighted with bold letters. 
If $R$ or $L$ are arbitrary for a given encoding scheme, we denote that with $\N$. 
All encoding schemes use sub-generators of the form 
$H_{l, r} = \beta_{l, r} \cdot H$. If $H\in \lset{\nicefrac{X}{2}, \nicefrac{Y}{2}, \nicefrac{Z}{2}}$ is the half of a Pauli matrix, we abbreviate that by $H = \nicefrac{P}{2}$. Otherwise, the dimension of the arbitrary Hermitian matrix $H$ is stated in column $H$. If we write $k$, an arbitrary power of $2$ is allowed. The column named "Equal" indicates if equal data encoding layers are required. If the encoding scheme is maximal, we denote the type of maximality in the maximal column, either it is maximal in size $|\Omega|$ or maximal in $K$ such that $\Z_K \subseteq \Omega$. In the Golomb encoding scheme the eigenvalues of $H$ are a Golomb ruler and $q:=\log_2(k)$ with $q|R$. Similarly, in the turnpike encoding scheme, the eigenvalues of $H$ are a solution to the relaxed turnpike problem. Only when $L=1$ and $k=d$ this scheme guaranteed to be maximal.}
\label{tab:result overview}
\end{table}

Overall, this work advances the theoretical understanding of spectral properties of quantum neural networks by clarifying the underlying symmetries, maximality conditions, and algebraic structure of their frequency spectra. Our results rigorously characterize which frequency ranges are accessible for a given class of ansätze and which structural limitations are imposed by the encoding.

At the same time, our numerical experiments demonstrate that the knowledge of the maximal frequency spectrum and the given area alone is not sufficient to predict or guarantee good trainability. In particular, ansätze with identical maximal spectra can exhibit markedly different optimization behavior. Consequently, while the maximality results provide a principled way to analyze and compare the expressive potential of different encodings, they do not constitute a complete or predictive criterion for ansatz selection or performance. Rather, they should be understood as a structural baseline that helps interpret the capabilities and limitations of a given model, but must be complemented by additional considerations related to optimization and training.

\subsection{Structure of the Paper}
The paper is structured as follows.
Section~\ref{sec:notations, definitions and setup} serves to define the notation and to introduce some necessary concepts such as the Kronecker and Minkowski sum, as well as to describe some of their useful properties.
In Section~\ref{sec: QNNs general}, we discuss the representation of Quantum Neural Networks (QNNs) as finite Fourier series and explain how the frequency spectrum can be expressed in terms of the generator's spectrum. This holds true for various selections of $N$ and $L$, encompassing both parallel and sequential approaches.
In Section~\ref{sec: Spectral Invariance Under Area-Preserving Transformations}, we show that the frequency spectrum of a QNN solely relies on the area $A=RL$ and remains independent of the specific assignment of sub-generators to layers and qubits. This observation elucidates the symmetry observed between single-qubit and single-layer QNNs commonly reported in the literature.
In Section~\ref{sec: Maximal Frequency Spectrum for 2-Dimensional Sub-Generators} we prove maximality results for the frequency spectrum of QNNs with $2$-dimensional sub-generators, while Section~\ref{sec:k-dimensional generators} addresses the case of arbitrary-dimensional sub-generators. In Section \ref{sec:examples}, we give some numerical examples and provide some practical insights.
\section{Notations}
\label{sec:notations, definitions and setup}
Let $\B$ denote the two-dimensional Hilbert space $\C^2 = \C\ket{0}+\C \ket{1}$ endowed with the standard inner product $\langle \cdot,\cdot\rangle$. For $R \in \N$ let $\BR$ be the $R$-fold tensor product of $\B$ and $\U_R:=\lset{ U \in \End{\BR} | UU^{\dagger}=\Id}$ the unitary group of $\BR$. With $R$ we always denote the number of considered qubits.
The overall dimension of $\BR$ is denoted by $d=2^R$.
Sometimes it will also be useful to consider the standard inner product on $\R^N$. To avoid confusion with the Dirac notation, we write 
$\bx \cdot \by = \bx_1\by_1 + \ldots + \bx_N\by_N$ for $\bx,\by \in \R^N$.

For $n \in \N_0 = \lset{0, 1, 2, 3,\ldots}$ we define
\begin{align*}
    [n] := \lset{0,1,\ldots,n-1} \subseteq \N_0
\end{align*} 
and
\begin{align*}
    \Z_n :=\lset{-n,\ldots,0,\ldots,n} \subseteq \Z.
\end{align*}
We make use of the natural identification of elements $j \in [d]$ with the vectors of the computational basis $|j\rangle$ of $\BR$.

The data encoding layers $S_l(x)$ of QNNs typically consist of smaller building blocks $S_l(x) = \bigotimes_{r=1}^R e^{-i x H_{r, l}}$ with sub-generators $H_{r, l}$. To see that $S_l(x)$ can be rewritten in the form $S_l(x) = e^{-ix\tilde{H}}$ for some generator $\tilde{H}$ as required, the following construction is useful.
\begin{definition}[Kronecker Sum]
   Let $V_r$ be some finite dimensional vector spaces and $H_r \in \End{V_r}$ be linear maps for all $r=1,...,R$. Define
   \begin{align*}
       H_r':= \Id_1\otimes\cdots \otimes \Id_{r-1}\otimes H_r \otimes \Id_{r+1} \cdots \otimes \Id_R,
   \end{align*}
   where $\Id_r \in \End{V_r}$ denotes the identity map.
   The \textbf{Kronecker sum} of the linear maps is defined as
   \begin{align*}
       \bigoplus_{r=1}^R H_r := \sum_{r=1} H_r'.
   \end{align*}
\end{definition}
The Kronecker sum has the following well-known properties:
\begin{lemma}[Properties of the Kronecker Sum]
\label{lemma:basic properties kronecker sum}
    Let $V_r$ be some finite dimensional vector spaces and $H_r \in \End{V_r}$ be Hermitian for all $r=1,...,R$.
    \begin{itemize}
    \item[(1)] It holds
    \begin{align*}
        \bigotimes_{r=1}^R e^{H_r} = e^{\oplus_{r=1}^R H_r}.
    \end{align*}
    In particular, $\bigotimes_{r=1}^R e^{-i x H_r} = e^{-i x \tilde{H}}$ for some Hermitian $\tilde{H}$ and all $x \in \R$.
    \item[(2)] 
    We have
    \begin{align*}
        \sigma\left(\bigoplus_{r=1}^R H_r\right) = \sum_{r=1}^R \sigma(H_r),
    \end{align*}
    where $\sigma(\cdot)$ denotes the spectrum of the operator.
\end{itemize}
\end{lemma}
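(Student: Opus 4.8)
The plan is to reduce both statements to the single structural observation that the lifted operators $H_r'$ pairwise commute. Indeed, for $r \neq s$ the operators $H_r'$ and $H_s'$ act nontrivially on disjoint tensor factors, so by the mixed-product property $(A_1\otimes\cdots\otimes A_R)(B_1\otimes\cdots\otimes B_R) = (A_1B_1)\otimes\cdots\otimes(A_RB_R)$ one checks directly that $H_r'H_s' = H_s'H_r'$. From this commutativity everything else follows.

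For part (1), I would first use that commuting summands split the exponential of a sum into a product, giving $e^{\oplus_{r=1}^R H_r} = e^{\sum_{r=1}^R H_r'} = \prod_{r=1}^R e^{H_r'}$. Next I would compute $e^{H_r'}$ explicitly: since $(H_r')^n = \Id_1\otimes\cdots\otimes H_r^n\otimes\cdots\otimes\Id_R$ for every $n$, summing the exponential series term by term and using linearity of the tensor product yields $e^{H_r'} = \Id_1\otimes\cdots\otimes e^{H_r}\otimes\cdots\otimes\Id_R$. Finally, multiplying these factors and collapsing the identities by the mixed-product property gives $\prod_{r=1}^R e^{H_r'} = \bigotimes_{r=1}^R e^{H_r}$, which is the claim. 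The ``in particular'' statement then follows by substituting $H_r \mapsto -ixH_r$ and observing that $\oplus_{r=1}^R(-ixH_r) = -ix\,\tilde H$ with $\tilde H := \oplus_{r=1}^R H_r$; each $H_r'$ is a tensor product of Hermitian operators, hence Hermitian, so $\tilde H$ is Hermitian as a real-linear combination of Hermitian operators.

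For part (2), I would exploit that the $H_r'$ are commuting and Hermitian, hence simultaneously diagonalizable. Choosing for each $r$ an eigenbasis $v^{(r)}_1,\ldots$ of $H_r$ with $H_r v^{(r)}_j = \lambda^{(r)}_j v^{(r)}_j$, the product vectors $v^{(1)}_{j_1}\otimes\cdots\otimes v^{(R)}_{j_R}$ form a basis of $\bigotimes_r V_r$ and are simultaneous eigenvectors: $H_r'$ acts on such a vector as multiplication by $\lambda^{(r)}_{j_r}$, so $\left(\sum_r H_r'\right)$ acts as multiplication by $\sum_r \lambda^{(r)}_{j_r}$. Ranging over all index tuples, the eigenvalues of $\oplus_r H_r$ are exactly the sums $\sum_r \lambda^{(r)}_{j_r}$ with $\lambda^{(r)}_{j_r} \in \sigma(H_r)$, which is precisely the Minkowski sum $\sum_{r=1}^R \sigma(H_r)$.

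I do not expect a genuine obstacle here, as both parts are standard linear algebra; the only points requiring care are the term-by-term justification that $e^{H_r'}$ lifts $e^{H_r}$ (which relies on $(H_r')^n$ keeping the identities on the remaining factors) and the completeness of the tensor-product eigenbasis in part (2), which guarantees that no eigenvalue is missed. Once the commutativity of the $H_r'$ is established, the remainder is bookkeeping with the mixed-product property.
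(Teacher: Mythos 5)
Your proof is correct in all its steps: the commutativity of the lifted operators $H_r'$, the splitting $e^{\sum_r H_r'} = \prod_r e^{H_r'}$, the term-by-term identification $e^{H_r'} = \Id_1\otimes\cdots\otimes e^{H_r}\otimes\cdots\otimes\Id_R$, and the product eigenbasis argument for the spectrum are exactly the standard justifications. Note, however, that the paper itself offers no proof of this lemma --- it is stated as a collection of well-known properties of the Kronecker sum --- so there is no authorial argument to compare against; your write-up simply supplies the canonical proof that the paper implicitly relies on, and it does so without gaps.
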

As mentioned in the introduction, we are interested in the frequency spectra of QNNs, which can be represented via sums and differences of the sets of eigenvalues of the sub-generators. For this reason, the following notations are useful in this work.
\begin{definition}
    Let $A, A_1, A_2, \ldots, A_n \subseteq \R^N$ be arbitrary subsets.
    \begin{itemize}
        \item[(a)] The \textbf{Minkowski sum} of $A_1, \ldots, A_n$ is defined as 
        \begin{align*}
            \sum_{i=1}^n A_i := \lset{a_1 + \ldots + a_n | a_1 \in A_1, \ldots, a_n \in A_n}.
        \end{align*}
        \item[(b)] For any $r \in \R$, we define $r \cdot A := \lset{r \cdot a|\ a\in A}$.
        \item[(c)] The special case where $n=2$ and $A_2 = (-1)\cdot A_1$ is abbreviated
            as $\Delta A := \lset{ a - b | a,b \in A}$.
        We also write $A_1 \Delta A_2 := (A_1-A_2) \cup (A_2-A_1)$, which is not to be confused with the symmetric difference of sets, which we do not use in this work.
        
    \end{itemize}
\end{definition}
We note some well-known properties of the Minkowski sum.
\begin{lemma}
\label{lemma:minkowski properties}
    Let $A_1, A_2, \ldots, A_n, B_1, B_2, \ldots,B_n \subseteq \R^N$ be arbitrary subsets.
    \begin{itemize}
        \item[(a)] $\sum_{i=1}^n \Delta A_i = \Delta(\sum_{i=1}^n A_i)$.
        \item[(b)] $\Delta(A_1 \times \ldots \times A_n) = \Delta A_1 \times \ldots \times \Delta A_n$.
        \item[(c)] $\sum_{i=1}^n (A_i \times B_i) = (\sum_{i=1}^n A_i) \times (\sum_{i=1}^n B_i)$.
    \end{itemize}
\end{lemma}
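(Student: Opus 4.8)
The plan is to prove each of the three identities directly, by writing down an explicit description of a generic element of the left- and right-hand sides and observing that the two descriptions coincide. No structural machinery is needed beyond the commutativity and associativity of addition in $\R^N$ (and in the relevant product spaces) together with the single conceptual point that in a Minkowski sum the summands range \emph{independently}, so that the different pieces can be recombined freely.

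For part (a), the cleanest route is to observe that $\Delta A = A + (-1)\cdot A$ in Minkowski-sum notation, and then to use commutativity and associativity of the Minkowski sum together with the evident identity $(-1)\cdot\sum_{i=1}^n A_i = \sum_{i=1}^n (-1)\cdot A_i$:
\begin{align*}
    \sum_{i=1}^n \Delta A_i
    &= \sum_{i=1}^n \big(A_i + (-1)\cdot A_i\big) \\
    &= \Big(\sum_{i=1}^n A_i\Big) + (-1)\cdot\Big(\sum_{i=1}^n A_i\Big) \\
    &= \Delta\Big(\sum_{i=1}^n A_i\Big).
\end{align*}
Equivalently, and perhaps most transparently, a generic element of $\Delta(\sum_i A_i)$ is $\sum_i a_i - \sum_i b_i = \sum_i (a_i - b_i)$ with $a_i,b_i \in A_i$, which after rearranging the finite sum is literally a generic element of $\sum_i \Delta A_i$; the two set-builder descriptions are the same, proving the claim for all $n$ at once. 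Induction on $n$ from the base case $n=2$ is an available alternative but is unnecessary.

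For parts (b) and (c), I would argue that both subtraction and addition act componentwise on a Cartesian product. For (b), a generic element of $\Delta(A_1 \times \cdots \times A_n)$ has the form $(a_1,\ldots,a_n) - (b_1,\ldots,b_n) = (a_1 - b_1, \ldots, a_n - b_n)$ with $a_i,b_i \in A_i$, and since each coordinate $a_i - b_i$ ranges over all of $\Delta A_i$ independently, this is precisely a generic element of $\Delta A_1 \times \cdots \times \Delta A_n$. Part (c) is identical in spirit: a generic element of $\sum_{i=1}^n (A_i \times B_i)$ is $\sum_i (a_i,b_i) = (\sum_i a_i, \sum_i b_i)$ with $a_i \in A_i$ and $b_i \in B_i$, and as the two coordinate sums range independently over $\sum_i A_i$ and $\sum_i B_i$, this set equals $(\sum_i A_i) \times (\sum_i B_i)$.

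Honestly, there is no substantive obstacle here—each identity is an exercise in rewriting a generic element—so the only thing to watch is the bookkeeping. Concretely, I must read $\times$ as concatenation into the appropriate product vector space so that $+$ and $-$ act coordinatewise, and I should note the degenerate case in which some $A_i$ (or $B_i$) is empty, where both sides of each identity collapse to the empty set and the claim holds trivially. The one point worth stating explicitly is that all three identities rest on the independence of the ranges of the defining variables in a Minkowski sum: it is this independence, rather than any cancellation, that permits the ``$+$''-part and the ``$-$''-part in (a), and the separate Cartesian factors in (b) and (c), to be recombined.
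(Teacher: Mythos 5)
Your proof is correct. Note that the paper itself offers no proof of this lemma at all---it is stated as a collection of ``well-known properties'' of the Minkowski sum---so there is nothing to compare against; your direct element-wise verification (rewriting a generic element $\sum_i a_i - \sum_i b_i = \sum_i(a_i-b_i)$ for (a), and letting coordinates range independently for (b) and (c)) is exactly the standard argument the authors implicitly rely on, and your attention to the empty-set case and to reading $\times$ as concatenation so that $+$ acts coordinatewise closes the only points one could quibble about.
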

\section{Quantum Neural Networks and Fourier Series}
\label{sec: QNNs general}
Let us first recall the definition of Quantum Neural Networks. For that, we consider two types of ans\"atze.
\begin{definition}[Parallel and Sequential Ansatz]
\label{def:ansaetze}
Let $R,L\in \N$ and $\bx \in \R^N, \btheta \in \R^P$.
    \begin{itemize}
        \item[(a)] A \textbf{parallel ansatz} is a parametrized circuit of the form
        \begin{align*}
                U(\bx,\btheta) = W^{(L+1)}_{\btheta}S_L(\bx)W^{(L)}_{\btheta}\cdots W^{(2)}_{\btheta} 
                S_1(\bx)W^{(1)}_{\btheta}
        \end{align*}
        with
        \begin{align*}
            S_l(\bx) = \bigotimes_{n=1}^N e^{-i x_n H_{l, n}},
        \end{align*}
        where each $H_{l, n} \in \End{\BR}$ is Hermitian and each $W^{(l)}_{\btheta} \in \End{\B^{\otimes R\cdot N}}$ is unitary.

        \item[(b)]
        A \textbf{sequential ansatz} is a parametrized circuit of the form
        \begin{align*}
            U(\bx, \btheta) = \prod_{n=1}^N U_n(\bx_n, \btheta),
        \end{align*}
        with
        \begin{align*}
            U_n(\bx_n, \btheta) = 
            W^{(L+1, n)}_{\btheta}S_L(x_n)W^{(L, n)}_{\btheta}\cdots W^{(2, n)}_{\btheta} 
            S_1(x_n)W^{(1, n)}_{\btheta}
        \end{align*}
        and
        \begin{align*}
           S_l(x_n) = e^{-ix_n H_{l, n}}.
        \end{align*}
        Again, each $H_{l, n} \in \End{\BR}$ is Hermitian and each $W^{(l, n)}_{\btheta} \in \End{\BR}$ is unitary.
    \end{itemize}
\end{definition}
In both cases, the Hermitians $H_{l, n}$ are called the \textbf{generators} of the ansatz.
We briefly mention some simplifying assumptions we can make when working with these ans\"atze.
First, we can omit the dependence of $W^{(l)}_{\btheta}$ on $\btheta$ from the notation since we are not applying any optimization process and assume that all unitaries $W \in \U_R$ can be represtend by some $W_{\btheta}$.
Therefore we write $W^{(l)}:=W^{(l)}_{\btheta}$ from now on.

Second, since the $H_{l, n}$ are Hermitian, there exists unitaries $U_{l, n}\in \U_R$ such that $U_{l, n}^\dagger H_{l, n} U_{l, n} = \diag\left(\lambda_0^{(l, n)}, ..., \lambda_{d-1}^{(l, n)}\right)$ in the computational basis $|0\rangle, |1\rangle,...,|d-1\rangle \in \BR$. 
In the sequential ansatz, we can "absorb" the $U_{l, n}$ in the unitaries $W^{(l, n)}$ via $\tilde{W}^{(l, n)} := U_{l, n} W^{(l, n)} U_{l-1, n}^\dagger$. 
In the parallel ansatz, we can use the unitary 
$\bigotimes_{n=1}^N U_{l, n}$
and let it be absorbed by the $W^{(l)}$.
Hence, without loss of generality, the $H_{l, n}$ are diagonal with eigenvalues 
\begin{align*}
    \lambda_0^{(l, n)}, ..., \lambda_{d-1}^{(l, n)} \in \R.
\end{align*}

\begin{definition}[Quantum Neural Networks]
\label{def:qnn}
    Let $U(\bx)$ be a parallel or sequential ansatz. A \textbf{Quantum Neural Network (QNN)} is a function of the form
    \begin{align*}
        f: \R^N &\longrightarrow \R \\
        \bx &\longmapsto 
        \langle 0| U^\dagger(\bx) M U(\bx)|0\rangle,
    \end{align*}
    with some Hermitian $M$. We call the tuple $(R, L) \in \N^2$ the \textbf{shape} of the QNN. 
\end{definition}

In \cite{Schuld2020EffectOD} it was shown that QNNs with equal data encoding layers can be written in the form
\begin{align*}\label{eq:fourier_series}
    f(\bx) = \sum_{\omega \in \bOmega} c_{\bomega}e^{i \bx \cdot \bomega},
\end{align*}
where $c_{\bomega} \in \C$\ and $\bOmega \subseteq \R^N$ is a finite set, called the \textbf{frequency spectrum}. If $\bOmega \subseteq \Z^N$, then this sum is the partial sum of a Fourier series.
We reformulate the above results for univariate models in the following theorem.
Recall that for $N=1$, the parallel and sequential ansatz are equal and we write $H_l:=H_{l, 1}$.

\begin{theorem}[Univariate QNN is a Fourier Series]
\label{theorem: Univariate QNN is Fourier}
Let $f(x) = \langle 0 | U^{\dagger}(x) M U(x)|0\rangle$ be a univariate QNN with arbitrary generators $H_{l} \in \End{\B^{\otimes R}}$.
Then
\begin{align*}
    f(x) = \sum_{\omega  \in \Omega} c_{\omega} e^{-i \omega \cdot x},
\end{align*}
    with
\begin{align*}
    \Omega 
    = \sum_{l=1}^L \Delta \sigma (H_l).
\end{align*}
\end{theorem}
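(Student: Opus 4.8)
The plan is to expand the circuit in the computational basis and track only the phases coming from the diagonal data-encoding layers. By the diagonalization remark preceding the theorem I may assume each $H_l = \diag\left(\lambda_0^{(l)}, \ldots, \lambda_{d-1}^{(l)}\right)$, so that $S_l(x) = \diag\left(e^{-ix\lambda_0^{(l)}}, \ldots, e^{-ix\lambda_{d-1}^{(l)}}\right)$ leaves the basis index unchanged and contributes a single phase. Writing $k_0 := 0$ for the initial state, $k_1, \ldots, k_L$ for the indices processed by $S_1, \ldots, S_L$, and $k_{L+1} := k$ for the final index, writing out the matrix product $W^{(L+1)} S_L W^{(L)} \cdots S_1 W^{(1)}$ in the computational basis yields
\begin{align*}
    \langle k | U(x)|0\rangle = \sum_{k_1, \ldots, k_L} \left(\prod_{l=0}^{L} W^{(l+1)}_{k_{l+1}, k_l}\right) e^{-i x \sum_{l=1}^L \lambda_{k_l}^{(l)}}.
\end{align*}
Thus $U(x)|0\rangle$ is a sum over ``paths'' $(k_1, \ldots, k_L)$, each carrying a phase with frequency the path-sum $\sum_l \lambda_{k_l}^{(l)}$ and a coefficient built only from entries of the $W^{(l)}$.

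Next I would insert this into $f(x) = \langle 0| U^\dagger(x) M U(x)|0\rangle = \sum_{k, k'} \overline{\langle k|U(x)|0\rangle}\, M_{k k'} \langle k'|U(x)|0\rangle$. Expanding both amplitudes produces a double sum over a path $(k_1, \ldots, k_L)$ for the bra and $(k_1', \ldots, k_L')$ for the ket, whose phases combine into
\begin{align*}
    e^{+ix\sum_l \lambda_{k_l}^{(l)}}\, e^{-ix\sum_l \lambda_{k_l'}^{(l)}} = e^{-ix\, \omega}, \qquad \omega = \sum_{l=1}^L \left(\lambda_{k_l'}^{(l)} - \lambda_{k_l}^{(l)}\right).
\end{align*}
The coefficient of this exponential is $\overline{(\cdots)}\, M_{k k'} (\cdots)$, depending only on $M$ and the $W^{(l)}$; grouping all path pairs that share a common $\omega$ then gives the claimed Fourier form with $c_\omega \in \C$.

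Finally I would identify the frequency set. For each layer $l$, both $\lambda_{k_l}^{(l)}$ and $\lambda_{k_l'}^{(l)}$ range over $\sigma(H_l)$, so each difference $\lambda_{k_l'}^{(l)} - \lambda_{k_l}^{(l)}$ lies in $\Delta\sigma(H_l)$ by definition of $\Delta$; hence every realized $\omega$ lies in the Minkowski sum $\sum_{l=1}^L \Delta\sigma(H_l)$. Setting $\Omega := \sum_{l=1}^L \Delta\sigma(H_l)$ and allowing $c_\omega = 0$ for frequencies not actually attained, the sum runs over exactly this set, which is finite since each $\sigma(H_l)$ has at most $d$ elements and a Minkowski sum of finitely many finite sets is finite.

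The only real work is careful index bookkeeping: confirming that the diagonal layers preserve indices, that the bra picks up the complex-conjugate (hence sign-flipped) phase, and that the per-layer differences land in $\Delta\sigma(H_l)$ rather than some larger set. I expect the main subtlety to be purely notational---correctly aligning the $L+1$ unitaries with the $L$ encoding layers and fixing the boundary indices $k_0 = 0$ and $k_{L+1} = k$---rather than conceptual. As a consistency check, reality of $f$ and the symmetry $\omega \mapsto -\omega$ of $\Omega$ come for free, since $M$ is Hermitian and each $\Delta\sigma(H_l)$ is symmetric about $0$.
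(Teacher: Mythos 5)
Your proposal is correct and follows essentially the same route as the paper's proof: expand $U(x)|0\rangle$ in the computational basis as a sum over index paths, combine bra and ket paths into differences of eigenvalue sums, group equal frequencies, and identify the frequency set with $\sum_{l=1}^L \Delta\sigma(H_l)$. The only cosmetic difference is that the paper first writes the set of realized frequencies as $\Delta\sum_{l=1}^L \sigma(H_l)$ and then moves the $\Delta$ inside via its Minkowski-sum lemma, whereas you take the per-layer differences directly; these are equivalent.
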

The result above can be generalized to multivariate QNNs. We skip the technical details and just give the final result here. The details and a proof of the previous theorem and the following result can be found in Appendix \ref{appendix:frequency spectrum of multivariate models}.
\begin{theorem}[Frequency Spectrum of  a Multivariate QNN]
\label{theorem: frequency spectrum of multivariate QNN is rectangular}
The frequency spectrum $\bOmega_{L, N}$ of a multivariate QNN with generators $H_{l, n} \in \End{\BR}$, regardless of whether the parallel or sequential approach is chosen, is given by
\begin{align*}
        \bOmega_{L, N} = 
        \Omega_{L}\left(H_{1, 1},...,H_{L, 1}\right) \times \cdots \times \Omega_{L}\left(H_{1, N},...,H_{L, N}\right),
\end{align*}
where $\Omega_{L}\left(H_{1, n},...,H_{L, n}\right)$ denotes the frequency spectrum of the univariate model with generators $H_{1, n},...,H_{L, n}$ for all $n=1,\ldots,N$.
\end{theorem}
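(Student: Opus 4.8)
The plan is to mimic the basis-expansion computation from the proof of Theorem~\ref{theorem: Univariate QNN is Fourier}, but now tracking how the accumulated phase decomposes variable by variable. First I would treat the two ansätze separately, since their underlying Hilbert spaces differ. For the parallel ansatz the state lives in $\B^{\otimes RN}$, so I index the computational basis by tuples $\bj = (j^{(1)},\ldots,j^{(N)})$ with each $j^{(n)} \in [d]$. Using property (1) of Lemma~\ref{lemma:basic properties kronecker sum}, the data-encoding layer $S_l(\bx) = \bigotimes_{n=1}^N e^{-i x_n H_{l,n}}$ acts on a basis vector by the phase $e^{-i\sum_{n=1}^N x_n \lambda_{j^{(n)}}^{(l,n)}}$, where $\lambda_{j^{(n)}}^{(l,n)}$ is an eigenvalue of $H_{l,n}$. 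Propagating this through all $L$ layers exactly as in the univariate proof, the total phase attached to an index chain is $e^{-i\,\bx\cdot\bLambda}$ with $\bLambda = (\Lambda^{(1)},\ldots,\Lambda^{(N)})$ and $\Lambda^{(n)} := \sum_{l=1}^L \lambda_{j_l^{(n)}}^{(l,n)} \in \sum_{l=1}^L \sigma(H_{l,n})$.

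For the sequential ansatz the computation is the same in spirit, now working on $\BR$, since $U(\bx) = \prod_{n=1}^N U_n(x_n)$ and each block $U_n$ is a univariate circuit in $x_n$ alone. Expanding all blocks over the computational basis produces exactly the same phase $e^{-i\,\bx\cdot\bLambda}$, where $\Lambda^{(n)}$ now collects the eigenvalue choices internal to block $n$.

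The key structural observation, which I would state explicitly, is that the eigenvalue indices associated with different variables are selected independently of one another, so that as the index chain ranges over all possibilities the phase vector $\bLambda$ ranges over the full Cartesian product $\prod_{n=1}^N \sum_{l=1}^L \sigma(H_{l,n})$. Forming $f(\bx) = \langle 0|U^\dagger(\bx) M U(\bx)|0\rangle$ and grouping terms of equal frequency as in the univariate case, the frequency spectrum is the set of differences of these phase vectors, that is
\begin{align*}
    \bOmega_{L,N} = \Delta\left(\prod_{n=1}^N \sum_{l=1}^L \sigma(H_{l,n})\right).
\end{align*}
Applying part (b) of Lemma~\ref{lemma:minkowski properties} to pull $\Delta$ through the Cartesian product, and then recognizing each factor $\Delta \sum_{l=1}^L \sigma(H_{l,n})$ as the univariate spectrum $\Omega_L(H_{1,n},\ldots,H_{L,n})$ via Theorem~\ref{theorem: Univariate QNN is Fourier}, yields the claimed factorization for both ansätze simultaneously.

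I expect the main obstacle to be the index bookkeeping in the sequential ansatz: because the $N$ blocks act on the same $\BR$ and are chained together through a shared basis index, one must argue carefully that the phase contributed in the variable $x_n$ depends only on the eigenvalue choices internal to block $n$, and that these choices remain free once the other blocks' indices are fixed. This independence is precisely what guarantees that the phase vectors fill out the entire Cartesian product rather than some proper subset of it. The parallel case is cleaner, since the tensor-product structure makes the independence manifest through the Kronecker sum.
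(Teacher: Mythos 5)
Your proposal is correct and takes essentially the same route as the paper's appendix proof: expand both ansätze over the computational basis exactly as in Theorem \ref{theorem: Univariate QNN is Fourier} (handling the sequential case by noting the blocks' internal eigenvalue indices remain free despite the shared boundary index), then factor the resulting set with Lemma \ref{lemma:minkowski properties}. The only cosmetic difference is that you accumulate phases per variable, writing the spectrum as $\Delta\left(\prod_{n=1}^N \sum_{l=1}^L \sigma(H_{l,n})\right)$, whereas the paper accumulates per layer, writing $\sum_{l=1}^L \Delta\left(\sigma(H_{l,1})\times\cdots\times\sigma(H_{l,N})\right)$; the two expressions coincide by the same lemma.
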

According to Theorem \ref{theorem: frequency spectrum of multivariate QNN is rectangular}, the frequency spectrum of a multivariate QNN depends only on the generators of the ansatz. Specifically, it is rectangular, that is, the Cartesian product of the corresponding univariate frequency spectra $\Omega_{L}\left(H_{n, 1},...,H_{n, L}\right)$ to the variable $x_n$. In the following we are only interested in the frequency spectrum of a QNN. In particular, our goal is to maximise it, so by Theorem \ref{theorem: frequency spectrum of multivariate QNN is rectangular}, maximising the multivariate frequency spectrum is equivalent to maximising each univariate frequency spectrum. It is therefore sufficient to consider only univariate models, i.e. $N=1$, in the following.
\section{Spectral Invariance Under Area-Preserving Transformations}
\label{sec: Spectral Invariance Under Area-Preserving Transformations}
We have seen that the frequency spectrum of a QNN depends only on the eigenvalues of the generators $H_{l}$, or, if it is composed of smaller sub-generators $H_{r, l}$, on their eigenvalues. In this section, we show how to transform the QNN in such a way that the frequency spectrum remains invariant and the resulting QNN uses the same sub-generators. More precisely, given a QNN of shape $(R, L)$, we show that one can rearrange the sub-generators without changing the frequency spectrum as long as the \textbf{area} $A:= R \cdot L$ of the QNN is preserved.
We summarise the previous idea in the following theorem.
\begin{theorem}[Spectral Invariance Under Area-Preserving Transformations]
\label{theorem: Spectral Invariance Under Area-Preserving Transformations}
Let $k=2^q$ and
let $R, R', L, L' \in \N$ with $R \cdot L = R' \cdot L'$ such that $q$ divides $R$ and $R'$. Then every bijection 
\begin{align}
    b: \left[\nicefrac{R'}{q}\right] \times [L'] \rightarrow \left[\nicefrac{R}{q}\right] \times [L]
\end{align}
induces a bijection
\begin{align}
    \mathcal{B}_b : \lset{\text{QNN}_k|\ \text{QNN}_k \text{ has shape } (R, L)}
    \longrightarrow
    \lset{\text{QNN}_k|\ \text{QNN}_k \text{ has shape } (R', L')}
\end{align}
between univariate QNNs which only consists of $k$-dimensional sub-generators such that the frequency spectrum of the models is invariant under that transformation. 

In particular, if we make no further assumptions on the generators like equal data encoding layers, the maximal possible frequency spectrum of a QNN with $k$-dimensional sub-generators is only dependent on the \textbf{area} $A(R, L):=R \cdot L \in \N$ and not on the individual $R, L$.
We call $\mathcal{B}_b$ an \textbf{area-preserving transformation}.
\end{theorem}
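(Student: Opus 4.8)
The plan is to reduce the frequency spectrum of any QNN with $k$-dimensional sub-generators to a single Minkowski sum taken over the whole collection of sub-generators, and then to observe that such a sum is insensitive to how its summands are labelled. First I would expand the spectrum. Writing each layer generator as a Kronecker sum $H_l = \bigoplus_{r=1}^{\nicefrac{R}{q}} H_{r,l}$ of $k=2^q$-dimensional sub-generators (possible since $q \mid R$), Theorem~\ref{theorem: Univariate QNN is Fourier} gives $\Omega = \sum_{l=1}^L \Delta\sigma(H_l)$. Part~(2) of Lemma~\ref{lemma:basic properties kronecker sum} rewrites $\sigma(H_l)$ as $\sum_{r} \sigma(H_{r,l})$, and part~(a) of Lemma~\ref{lemma:minkowski properties} lets the difference operator pass through the Minkowski sum, so that
\begin{align*}
    \Omega = \sum_{l=1}^{L}\sum_{r=1}^{\nicefrac{R}{q}} \Delta\sigma(H_{r,l}) = \sum_{(r,l) \in [\nicefrac{R}{q}]\times[L]} \Delta\sigma(H_{r,l}).
\end{align*}
Thus the spectrum depends only on the indexed family $\big(\sigma(H_{r,l})\big)_{(r,l)}$ of sub-generator spectra, and not at all on how these are grouped into layers.

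Next I would define $\mathcal{B}_b$ on the sub-generator data. Given a QNN of shape $(R,L)$ with sub-generators $H_{r,l}$, I set the sub-generators of the image QNN of shape $(R',L')$ to be $H'_{r',l'} := H_{b(r',l')}$ for $(r',l') \in [\nicefrac{R'}{q}]\times[L']$. This is well defined because the area identity $R'L' = RL$ forces the index sets to have equal size, $|[\nicefrac{R'}{q}]\times[L']| = \nicefrac{R'L'}{q} = \nicefrac{RL}{q} = |[\nicefrac{R}{q}]\times[L]|$, so $b$ is a bijection between index sets of matching cardinality. Applying the displayed formula to the image QNN and substituting the index via $b$ gives
\begin{align*}
    \Omega' = \sum_{(r',l')} \Delta\sigma\big(H_{b(r',l')}\big) = \sum_{(r,l)} \Delta\sigma(H_{r,l}) = \Omega,
\end{align*}
since the Minkowski sum is commutative and associative and $b$ merely reindexes its summands. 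The inverse map is induced by $b^{-1}$, whence $\mathcal{B}_{b^{-1}} \circ \mathcal{B}_b = \mathrm{id}$, so $\mathcal{B}_b$ is a bijection.

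The step I expect to require the most care is promoting this relabelling of generators to an honest bijection between the \emph{full} sets of QNNs, because a QNN also carries parameter unitaries $W^{(1)},\dots,W^{(L+1)}$ and an observable $M$, and passing from shape $(R,L)$ to $(R',L')$ changes both the ambient dimension ($2^R$ versus $2^{R'}$) and the number of parameter layers ($L+1$ versus $L'+1$). There is therefore no canonical transport of this extra data. The point I would make rigorous is that it is also irrelevant: by the displayed formula $\Omega$ is a function of the sub-generators alone, so I would fix once and for all any bijection between the parameter-and-observable data of the two shapes -- these range over finite products of unitary groups and Hermitian matrices, hence over sets of the same (continuum) cardinality -- and let $\mathcal{B}_b$ act by $b$ on the sub-generators and by this fixed bijection on the remaining data. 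Since only the sub-generators enter $\Omega$, the resulting set bijection preserves the frequency spectrum.

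Finally, the \emph{in particular} claim follows formally. The collection of frequency spectra realisable by shape-$(R,L)$ QNNs is exactly the image of the map sending a sub-generator family to $\sum_{(r,l)} \Delta\sigma(H_{r,l})$, and by the construction above $\mathcal{B}_b$ is a spectrum-preserving bijection onto the shape-$(R',L')$ QNNs whenever $RL = R'L'$. Hence the two shapes realise identical collections of spectra, so any extremal spectrum (largest $|\Omega|$, or largest $K$ with $\Z_K \subseteq \Omega$) is attained for one shape if and only if it is attained for the other. In other words, the realisable spectra -- and therefore the maximal one in either sense -- depend only on the number $\nicefrac{RL}{q}$ of available $k$-dimensional sub-generators, i.e.\ only on the area $A = R\cdot L$ and not on the individual values of $R$ and $L$.
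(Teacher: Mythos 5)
Your proposal is correct and takes essentially the same route as the paper: both reduce the frequency spectrum to the single Minkowski sum $\sum_{(r,l)} \Delta\sigma(H_{r,l})$ over all sub-generators (via Theorem~\ref{theorem: Univariate QNN is Fourier} together with Lemmas~\ref{lemma:basic properties kronecker sum} and~\ref{lemma:minkowski properties}) and then observe that $\mathcal{B}_b$ merely permutes the summands of a commutative, associative sum. Your careful treatment of the parameter unitaries and the observable via a fixed cardinality bijection matches the paper's own resolution, which it records in the remark following the proof (a bijection of equivalence classes of QNNs, extendable to a genuine bijection using the Schröder--Bernstein theorem and the axiom of choice).
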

\begin{proof}
    Let $k=2^q$ and $H_{r, l} \in \End{\B^{\otimes q}}$ be the $k$-dimensional sub-generators of a univariate QNN of shape $(R, L)$.
    By Theorem \ref{theorem: Univariate QNN is Fourier} and Lemma \ref{lemma:minkowski properties}, the frequency spectrum of that QNN is given by 
    \begin{align*}
     \Omega = \sum_{l=1}^L \sum_{r=1}^{\nicefrac{R}{q}} \Delta \sigma(H_{r, l}).
 \end{align*}
 The QNN of shape $(R', L')$ with sub-generators $H_{b(r-1, l-1) + (1, 1)}$, then has trivially the same frequency spectrum, as it is only a permutation of the sub-generators and the Minkowski sum remains the same. We only needed to perform an index shift, since $r$ ranges from $1, \ldots, R/q$ and $l$ from $1, \ldots, L$, whereas $[R/q] = \{0, \ldots, R/q - 1\}$ and $[L] = \{0, \ldots, L - 1\}$.

\end{proof}
Note that two QNNs are considered equal if and only if they have the same shape and all their sub-generators $H_{r, l}$ and $H_{r, l}'$ for all $r=1,\ldots , R$ and $l=1,\ldots , L$ are equal. Strictly speaking, however, the QNNs also depend on the parameter encoding layers. Therefore, we have only constructed a bijection between equivalence classes of QNNs, which is sufficient for our cases here. Using the Schr\"oder-Bernstein theorem of set theory and the axiom of choice, it is possible to extend this to a bijection of QNNs considering the parameter encoding layers.

By Theorem \ref{theorem: Univariate QNN is Fourier}, the only relevant information for the frequency spectrum are the generators or sub-generators. Thus, the QNN can be represented by a rectangular plot containing these sub-generators as squares in the arrangement matching their occurrences in the circuit. Theorem \ref{theorem: Spectral Invariance Under Area-Preserving Transformations} now implies that this arrangement is irrelevant, since we can transform each QNN with an area-preserving transformation in all other compatible rectangles. We sketch this in Figure \ref{fig:invariance under area-preserving transformation}.
\begin{figure}
    \centering
    \resizebox{0.95\linewidth}{!}{
        \begin{tikzpicture}[scale = 1, line width = 1.5]
    \definecolor{colorblue}{HTML}{007AFF}
    \definecolor{colorred}{HTML}{D20000}
    \definecolor{colorgreen}{HTML}{7EC636}
   
    \def \roundness{3};

    \coordinate (origin) at (0, 0);

    \newcommand{\figureGreenUnder}[5]{
        \draw[color = colorgreen] (#1, #2) -- ++(#3, 0);
        \draw[color = colorgreen] (#1, #2) -- ++(0, #4);
        \draw[color = colorgreen] (#1 + #3, #2) -- ++(0, #4);
        \node[below, color = colorgreen] at (#1 + 0.5 * #3, #2) {#5};
    }

    \newcommand{\figureGreenLeft}[5]{
        \draw[color = colorgreen] (#1, #2) -- ++(0, #3);
        \draw[color = colorgreen] (#1, #2) -- ++(#4, 0);
        \draw[color = colorgreen] (#1, #2 + #3) -- ++(#4, 0);
        \node[left, color = colorgreen] at (#1 , #2 + 0.5 * #3) {#5};
    }
    
    \foreach \x in {1, 2, 3, 4} {
        \foreach \y in {1, 2, 3} {
            \draw[fill = colorblue, fill opacity = 0.2, colorblue] (\x, -\y) rectangle node[black, fill opacity = 1.0] {$H_{\y, \x}$} ++(1, 1);
        }
    }
    \node[above] at (6.5, -1.5) {$\mathcal{B}_b$};
    \draw[<->, colorblue] (5.5, -1.5) -- ++(2,0);
    \foreach \x in {1, 2, 3, 4, 5, 6} {
        \foreach \y in {1, 2} {
            \draw[fill = colorblue, fill opacity = 0.2, colorblue] (\x+ 8.5, -\y-0.5) rectangle  ++(1, 1);
        }
    }
    \foreach \x in {1, 2, 3, 4} {
        \node[] at (9 + \x, -1) {$H_{1, \x}$};
    }
    \foreach \x in {1, 2} {
        \node[] at (9 + 4 + \x, -1) {$H_{2, \x}$};
    }
    \foreach \x in {3, 4} {
        \node[] at (9 - 2 + \x, -2) {$H_{2, \x}$};
    }
    \foreach \x in {1, 2, 3, 4} {
        \node[] at (9 + 2 + \x, -2) {$H_{3, \x}$};
    }

    \figureGreenLeft{0.5}{-3}{3}{0.3}{$\nicefrac{R}{q} = 3$}
    \figureGreenLeft{9}{-2.5}{2}{0.3}{$\nicefrac{R'}{q} = 2$}

    \figureGreenUnder{1}{-3.5}{4}{0.3}{$L = 4$}
    \figureGreenUnder{9.5}{-3}{6}{0.3}{$L' = 6$}
    
\end{tikzpicture}
    }
    \caption{Visualisation of Theorem \ref{theorem: Spectral Invariance Under Area-Preserving Transformations}. By Theorem \ref{theorem: Univariate QNN is Fourier}, the frequency spectrum only depends on the sub-generators $H_{r, l}$. The QNN can thus be represented by a rectangle containing these sub-generators as squares in the arrangement matching their occurrence in the quantum circuit. By Theorem \ref{theorem: Spectral Invariance Under Area-Preserving Transformations}, the arrangement is irrelevant for the frequency spectrum, as there exists an area-preserving transformation for all other compatible rectangles.}
    \label{fig:invariance under area-preserving transformation}
\end{figure}

If no requirements are made on the generators, such as equal data coding layers, Theorem \ref{theorem: Spectral Invariance Under Area-Preserving Transformations} implies that, without loss of generality, we can consider single-layer models to study the maximum frequency spectrum of QNNs with $k$-dimensional generators. The results depending on $R$ can then be generalised to the multi-layer case by simply replacing the dependence on $R$ by $R \cdot L$.

The technical requirement that $q$ divides $R$ is only necessary because we have made the assumption that the QNN consists only of sub-generators of dimension $k$, which is only possible if this is satisfied. We briefly sketch how to generalize Theorem \ref{theorem: Spectral Invariance Under Area-Preserving Transformations} allowing mixed-dimensional generators. In this case, the squares representing the sub-generators like in Figure \ref{fig:invariance under area-preserving transformation} are replaced by rectangles with side-length $q_{r, l} \times 1$, where $k_{r, l}=2^{q_{r, l}}$ is the dimension of the sub-generator $H_{r, l}$.
An area-preserving transformation thus is only possible into rectangular shapes which can be constructed out of these rectangular tiles, see Figure \ref{fig:invariance under area-preserving transformation, not k-dimensional} for an example.
\begin{figure}
    \centering
    \resizebox{0.75\linewidth}{!}{
        \begin{tikzpicture}[scale = 1, line width = 1.5]
    \definecolor{colorblue}{HTML}{007AFF}
    \definecolor{colorred}{HTML}{D20000}
    \definecolor{colorgreen}{HTML}{7EC636}

    \newcommand{\tile}[3]{
        \draw[fill = colorblue, fill opacity = 0.2, colorblue] (#1, -#2) rectangle ++(1, -#3);
    }
    \foreach \pos/\h in {0/3, 3/3} {
        \tile{0}{\pos}{\h}
    }
     \foreach \pos/\h in {0/2, 2/1, 3/1, 4/2} {
        \tile{1}{\pos}{\h}
    }
     \foreach \pos/\h in {0/1, 1/1, 2/4} {
        \tile{2}{\pos}{\h}
    }
     \foreach \pos/\h in {0/4, 4/2} {
        \tile{3}{\pos}{\h}
    }
     \foreach \pos/\h in {0/2, 2/1, 3/1, 4/2} {
        \tile{4}{\pos}{\h}
    }
    \foreach \pos/\h in {0/4, 4/2} {
        \tile{5}{\pos}{\h}
    }
    \node[above] at (7.5, -3) {$\mathcal{B}_b$};
    \draw[<->, colorblue] (6.5, -3) -- ++(2,0);

        \tile{9}{1}{4}
     \foreach \pos/\h in {0/2, 2/2} {
        \tile{9 + 1}{\pos - 1}{\h}
    }
    \foreach \pos/\h in {0/3, 3/1} {
        \tile{9 + 2}{\pos - 1}{\h}
    }
     \foreach \pos/\h in {0/2, 2/2} {
        \tile{9 + 3}{\pos - 1}{\h}
    }
     \foreach \pos/\h in {0/1, 1/1, 2/2} {
        \tile{9 + 4}{\pos - 1}{\h}
    }
        \tile{9 + 5}{1}{4}
        
     \foreach \pos/\h in {0/2, 2/2} {
        \tile{9 + 6}{\pos - 1}{\h}
    }
     \foreach \pos/\h in {0/2, 2/1, 3/1} {
        \tile{9 + 7}{\pos - 1}{\h}
    }
        \foreach \pos/\h in {0/1, 1/3} {
        \tile{9 + 8}{\pos - 1}{\h}
    }

\end{tikzpicture}
    }
    \caption{Example of how to extend Theorem \ref{theorem: Spectral Invariance Under Area-Preserving Transformations} to QNNs without the requirement that all generators are $k$-dimensional. The sides of the individual rectangles representing a $k_{r, l}=2^{q_{r, l}}$-dimensional sub-generator $H_{r, l}$ have side lengths $q_{r, l} \times 1$. In this example, we used $q=1, 2, 3, 4$, $(R, L) = (6, 6)$ and $(R', L') = (4, 9)$. An area-preserving transformation is only possible if the target rectangular could be tiled with the given sub-generator rectangles.}
    \label{fig:invariance under area-preserving transformation, not k-dimensional}
\end{figure}

We emphasise that the invariance is only for the frequency spectrum, we do not necessarily obtain the same Fourier series, since the parameter encoding layers $W^{(l)}$ and the observable $M$ remain unchanged and therefore have incompatible dimensions for other shapes $(R', L')$. Therefore, the parameter encoding layers and the observable would also need to be transformed for a bijection that preserves the full Fourier series. 
The consequence of a Fourier series preserving transformation would be that there would be no advantage in terms of expressibility and approximation in considering models with more than one layer or with more than one qubit. Thus, it is important to keep in mind that area-preserving transformations preserve only the frequency spectrum and not necessarily the entire Fourier series.

While the frequency spectrum depends only on the area $A$ and not on the concrete shape $(R, L)$, for practical applications the number of available qubits and error rates could play an important role in choosing a well-suited shape of the QNN.

In \cite{Kordzanganeh2022AnEF}, two exponential coding schemes were presented, namely the parallel exponential and the sequential exponential ansatz, both of which lead to the same frequency spectrum. This relationship can be explained using Theorem \ref{theorem: Spectral Invariance Under Area-Preserving Transformations}, as one ansatz is just the image under an area-preserving transformation of the other.
\section{2-Dimensional Sub-Generators}
\label{sec: Maximal Frequency Spectrum for 2-Dimensional Sub-Generators}
In this section we consider univariate QNNs with $2$-dimensional generators.
In \cite{Schuld2020EffectOD} it was shown that the frequency spectrum of a single layer model with Pauli sub-generators $\nicefrac{Z}{2}$ is given by $\Omega = \Z_R$, while the frequency spectrum of a single qubit QNN with the same sub-generators is given by $\Omega = \Z_L$. The symmetry in these results is a direct consequence of Theorem \ref{theorem: Spectral Invariance Under Area-Preserving Transformations}. If one allows an arbitrary $2$-dimensional sub-generator $H$ to replace $\nicefrac{Z}{2}$ and arbitrary many layers and qubits, the resulting QNN has the following frequency spectrum.
\begin{theorem}[Frequency Spectrum of the Hamming Encoding Strategy]
    Let $H \in \End{\B}$ be Hermitian and $ \lambda, \mu \in \R$ its eigenvalues.
    The frequency spectrum $\Omega$ of the univariate QNN with $2$-dimensional sub-generators $H_{r, l} = H$ is given by
    \begin{align}
        \Omega = (\lambda-\mu) \cdot \Z_{R L}.
    \end{align}
\end{theorem}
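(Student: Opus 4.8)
The plan is to reduce the statement to Theorem~\ref{theorem: Univariate QNN is Fourier} and then to an elementary combinatorial identity about Minkowski sums. First I would express the full layer generators via the Kronecker sum: since the data encoding layer factorises as $S_l(x) = \bigotimes_{r=1}^R e^{-ixH_{r,l}} = e^{-ix H_l}$ with $H_l = \bigoplus_{r=1}^R H_{r,l}$, Lemma~\ref{lemma:basic properties kronecker sum} gives $\sigma(H_l) = \sum_{r=1}^R \sigma(H_{r,l})$. Feeding this into Theorem~\ref{theorem: Univariate QNN is Fourier} and pushing the difference operator through the Minkowski sum with Lemma~\ref{lemma:minkowski properties}(a) yields
\[
    \Omega = \sum_{l=1}^L \Delta\sigma(H_l) = \sum_{l=1}^L\sum_{r=1}^R \Delta\sigma(H_{r,l}).
\]

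Next I would specialise to the assumption that every sub-generator equals $H$. Its spectrum is $\sigma(H) = \lset{\lambda,\mu}$, so $\Delta\sigma(H) = \lset{0,\,\lambda-\mu,\,\mu-\lambda} = (\lambda-\mu)\cdot\Z_1$, where I use that scaling a set commutes with $\Delta$ (immediate from Definition parts (b) and (c)). Factoring this common scalar out of the Minkowski sum then gives $\Omega = (\lambda-\mu)\cdot\sum_{i=1}^{RL}\Z_1$, reducing the theorem to the claim that the $RL$-fold Minkowski sum of $\Z_1 = \lset{-1,0,1}$ equals $\Z_{RL}$.

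The remaining step is purely combinatorial, and I would settle it by induction on the number $n$ of summands, proving $\sum_{i=1}^n \Z_1 = \Z_n$. The base case $n=1$ is the definition of $\Z_1$. For the inductive step I would establish both inclusions for $\Z_n + \Z_1 = \Z_{n+1}$: every sum of an element of $\Z_n$ and an element of $\lset{-1,0,1}$ lies in $\lset{-(n+1),\ldots,n+1}$, giving ``$\subseteq$''; conversely, each integer $m$ with $\lvert m\rvert \le n+1$ can be written as $m' + \delta$ with $m'\in\Z_n$ and $\delta\in\lset{-1,0,1}$, giving ``$\supseteq$''. This is exactly the instance of Theorem~\ref{theorem: Spectral Invariance Under Area-Preserving Transformations} with $k=2$, so the dependence collapses to the area $A = RL$ as expected.

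I do not anticipate a genuine obstacle here: the only points requiring any care are the index bookkeeping between the layer generators $H_l$ and the sub-generators $H_{r,l}$, and the verification that scalar multiplication commutes with $\Delta$ and with Minkowski summation, both of which follow directly from the definitions in Section~\ref{sec:notations, definitions and setup}. The degenerate case $\lambda=\mu$ is automatically covered, since then $\lambda-\mu=0$ and both sides collapse to $\lset{0}$.
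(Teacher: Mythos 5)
Your proof is correct, but it takes a slightly different route from the paper's. The paper first invokes Theorem~\ref{theorem: Spectral Invariance Under Area-Preserving Transformations} to reduce w.l.o.g.\ to the single-qubit case $R=1$, computes $\Omega = \sum_{l=1}^{L}\Delta\sigma(H) = (\lambda-\mu)\cdot\Z_L$ there, and then replaces $L$ by $RL$; you instead bypass the invariance theorem entirely, flattening the double Minkowski sum $\Omega = \sum_{l=1}^{L}\sum_{r=1}^{R}\Delta\sigma(H_{r,l})$ directly via Lemma~\ref{lemma:basic properties kronecker sum}(2) and Lemma~\ref{lemma:minkowski properties}(a), and you additionally prove the identity $\sum_{i=1}^{n}\Z_1 = \Z_n$ by induction, a step the paper simply asserts in passing inside its displayed equation. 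Your version is therefore more self-contained (it depends only on Theorem~\ref{theorem: Univariate QNN is Fourier} and the two lemmas, not on the area-preserving machinery) and fills in the one combinatorial fact the paper glosses over, while the paper's version is shorter and deliberately showcases the invariance principle as the source of the $RL$-dependence. One small correction: your closing remark that the induction $\Z_n + \Z_1 = \Z_{n+1}$ ``is exactly the instance of Theorem~\ref{theorem: Spectral Invariance Under Area-Preserving Transformations} with $k=2$'' is a mischaracterization --- that theorem concerns invariance of the spectrum under rearrangement of sub-generators and says nothing about Minkowski sums of $\Z_1$; the correct statement is merely that your final answer's dependence on the product $RL$ alone is \emph{consistent with} (and illustrated by) that theorem. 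Since the remark is not load-bearing, this does not affect the validity of the proof.
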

\begin{proof}
    By Theorem \ref{theorem: Spectral Invariance Under Area-Preserving Transformations}, w.l.o.g. we can assume $R=1$ and replace $L$ by $R\cdot L$ afterwards. By Theorem \ref{theorem: Univariate QNN is Fourier} and Lemma \ref{lemma:basic properties kronecker sum}, the frequency spectrum is given by
    \begin{align}
        \Omega = \sum_{l=1}^L \Delta \sigma(H) = \sum_{l=1}^L (\lambda-\mu) \cdot \Z_1 = (\lambda-\mu)\cdot \Z_{L}.
    \end{align}
\end{proof}
In \cite{Peters2022GeneralizationDO, Shin2022ExponentialDE} it was shown that the maximal frequency spectrum in both senses of a single layer QNN with Pauli sub-generators $\nicefrac{P}{2}$ is given by $\Omega_{\max} = \Z_{\frac{3^R-1}{2}}$, i.e. it is maximal in size and maximal in $K$ such that $\Z_K \subseteq \Omega$. For the ansatz with no further restrictions to the data encoding layers, we can extend this result directly by Theorem \ref{theorem: Spectral Invariance Under Area-Preserving Transformations} to QNNs with arbitrary many layers $L\geq 1$, i.e. $\Omega_{\max} = \Z_{\frac{3^{R\cdot L}-1}{2}}$. However, we follow another approach. We first show that the maximal frequency spectrum of a univariate QNN with equal data encoding layers and arbitrary $2$-dimensional generators is given by $\Omega_{\max} = \Z_{\frac{(2L+1)^R-1}{2}}$. We then use Theorem \ref{theorem: Spectral Invariance Under Area-Preserving Transformations} to extend this result to QNNs without the restriction of equal data encoding layers.

The maximum frequency spectrum of QNNs with equal data encoding layers is given by the following theorem, a proof can be found in Appendix \ref{appendix: maximality proofs 2dim}.
\begin{theorem}[Maximal Frequency Spectrum with Equal Data Encoding Layers]
   \label{theorem: maximal frequency spektrum equal data encoding layers}
Let $\Omega$ be the frequency spectrum of a univariate QNN with equal data encoding layers and $2$-dimensional sub-generators $H_{r, l} := H_r \in \End{\B}$ for all $r = 1,\ldots,R$.
Further, let $\lambda_r, \mu_r \in \R$ be the two eigenvalues of $H_r$ and w.l.o.g. assume $0 \leq \lambda_1 - \mu_1 \leq ... \leq \lambda_R - \mu_R$.
 \begin{itemize}
    \item[(a)] Let $K \in \N$ be maximal with respect to the property
        \begin{align*}
            \Z_K \subseteq \Omega.
        \end{align*} 
    Then $K \leq \frac{(2L+1)^R -1}{2}$ with equality if and only if $\lambda_r - \mu_r = (2L+1)^{r-1}$ for all $r=1,...,R$. 

    \item[(b)] In particular, if one fixes $H \in \End{\B}$ with eigenvalues $\lambda, \mu \in \R$ and $\lambda - \mu = 1$, one can set $H_r := (2L+1)^{r-1} H$ to obtain
    \begin{align*}
        \Omega = \Z_{\frac{(2L+1)^R-1}{2}}.
    \end{align*}
\end{itemize}
\end{theorem}
We can easily extend the above results to QNNs without requiring all layers to be equal, by reducing to Theorem \ref{theorem: maximal frequency spektrum equal data encoding layers}, using Theorem \ref{theorem: Spectral Invariance Under Area-Preserving Transformations} and applying a simple trick. Given a model with arbitrary shape $(R, L)$, the frequency spectrum is the same as that of a model with shape $(R', L') = (R \cdot L, 1)$, as obtained by Theorem \ref{theorem: Spectral Invariance Under Area-Preserving Transformations}. Since this transformed model has only a single layer, all layers are trivially equal, and Theorem \ref{theorem: maximal frequency spektrum equal data encoding layers} is therefore applicable. The detailed proof can be found in Appendix \ref{appendix: maximality proofs 2dim}.

\begin{theorem}[Maximal Frequency Spectrum for Arbitrary Data Encoding Layers]
   \label{theorem: maximal frequency spektrum, arbitrary S}
 Let $\Omega$ be the frequency spectrum of a univariate QNN with $2$-dimensional sub-generators $H_{r, l} \in \End{\B}$ for all $r = 1,\ldots, R$.
Further, let $\lambda_r, \mu_r \in \R$ be the two eigenvalues of $H_r$ and w.l.o.g. assume $0 \leq \lambda_1 - \mu_1 \leq ... \leq \lambda_R - \mu_R$.
 \begin{itemize}
    \item[(a)] Let $K \in \N$ be maximal with respect to the property
    \begin{align*}
     \Z_K \subseteq \Omega.
 \end{align*} 
 Then $K \leq \frac{3^{RL} -1}{2}$ with equality if and only if 
 $\lset{\lambda_r^{(l)} - \mu_r^{(l)}|\ r, l} = \lset{3^0,3^1,\ldots, 3^{R\cdot L-1}}$. 
 
 \item[(b)] In particular, if one fixes $H \in \End{\B}$ with eigenvalues $\lambda, \mu \in \R$ and $\lambda - \mu = 1$, one can set $H_{r, l} := 3^{l-1 + L \cdot (r-1)} H$ to obtain
\begin{align*}
     \Omega = \Z_{\frac{3^{R\cdot L}-1}{2}}.
 \end{align*}
\end{itemize}
\end{theorem}

\section{Arbitrary Dimensional Sub-Generators}
\label{sec:k-dimensional generators}
\subsection{Golomb Ruler}
So far, we only considered QNNs under the constraint that all subgenerators are $2$-dimensional, i.e. Hermitian operators acting on a single qubit. In \cite{Peters2022GeneralizationDO, Kordzanganeh2022AnEF}, this was extended to single layer QNNs with a $d=2^R$-dimensional generator, which is nothing other than allowing an arbitrary data-encoding $H$. In this case, the the frequency spectrum $\Omega$ is maximal in size with 
\begin{align}
    |\Omega| = 2 \binom{d}{2} + 1 = 2^R\left(2^R-1\right) + 1
\end{align}
if and only if the eigenvalues of $H$ are a so called Golomb ruler. We extend this approach to QNNs with $k=2^q$-dimensional subgenerators and arbitrary many layers.
\begin{definition}[Golomb Ruler]
    Let $\lambda_1,...,\lambda_k \in \R$ with $\lambda_1\leq...\leq\lambda_k$. They are a \textbf{Golomb ruler} if all differences of pairs $\lambda_i-\lambda_j$ for $i \neq j$ are pairwise different. Equivalently, the set of differences has size
    \begin{align}
        \big|\Delta \lset{\lambda_1,...,\lambda_k}\big| = 2 \binom{k}{2} + 1.
    \end{align}
    The order of a Golomb ruler $G= \lset{\lambda_1,...,\lambda_k}$ is defined as 
    \begin{align}
        \ord(G) := k,
    \end{align}
    and its length as
    \begin{align}
        \ell(G) := \lambda_k - \lambda_1.
    \end{align}
    An \textbf{optimal Golomb ruler} is one with a minimal length over all Golomb rulers with the same order. 
     A Golomb ruler is called \textbf{perfect} if $\Z_{\ell(G)} = \Delta G$, i.e. there are no gaps in $\Delta G$. 
\end{definition}
Golomb rulers first appeared in \cite{Sidon1932EinS} and have applications in astronomy \cite{Blum1975SomeNP}, radio engineering \cite{Babcock1953IntermodulationII, Atkinson1986IntegerSW} and information theory \cite{Robinson1967ACO}. 

Golomb rulers are linked to the maximum frequency spectrum in size. This is precisely the notion needed to avoid degeneracy, i.e. all frequencies except 0 occur only once. We explain how to construct a single layer QNN using only $k$-dimensional generators with a maximum frequency spectrum. First, we need that the number of qubits $q$ per generator divides $R$, otherwise it would be impossible to use only $k$-dimensional generators with $R$ qubits. Next, find a Golomb ruler $G=\lset{\lambda_1,...,\lambda_k}$ of order $k=2^q$ and let $H \in \End{\B^{\otimes q}}$ be some Hermitian with eigenvalues $\lambda_1,...,\lambda_k$. Then scale $H$ by some appropriate factors $\beta_r \in \R$ to obtain the final generators $H_r := \beta_r \cdot H$. The key idea is to find a scaling such that the resulting frequency spectrum remains non-degenerate.
However, choosing arbitrary Golomb rulers for $H$ and large factors $\beta_r$ will lead to gaps in the resulting frequency spectrum $\Omega$. For approximation properties, as in \cite{Schuld2020EffectOD}, one does not only want a large set $\Omega$, but the largest possible $K\in \N$ such that $\Z_K \subseteq \Omega$, or at least as few gaps as possible. There are two ways to reduce the number of gaps. First, use an optimal Golomb ruler. If the Golomb ruler is even perfect, there are no gaps and both notions of maximality are equivalent. However, it can be shown that there is no perfect Golomb ruler of order $k\geq 5$ \cite{dimitromanolakis2002analysis} and finding optimal Golomb rulers of large orders is a difficult task and sometimes conjectured to be NP-hard \cite{Duxbury2021ACO}. Second, choose the factors $\beta_r$ as small as possible.

We would like to emphasise that in this work we do not investigate whether such non-separable generators $H_r$ can be easily implemented on real physical hardware.
This being said, the maximum frequency spectrum with respect to size is as follows.
\begin{theorem}[$|\Omega|$-Maximal Frequency Spectrum]
\label{thm: maximality with Golomb Ruler}
    The maximal frequency spectrum $\Omega_{\max}$ of a univariate QNN with $k=2^q$-dimensional generators $H_{r} \in \End{\mathcal{B}^{\otimes q}}$, where $q$ divides $R$, has size
    \begin{align*}
        |\Omega_{\max}| = \left(4^q - 2^q + 1\right)^{\nicefrac{RL}{q}}.
    \end{align*}
    In particular, given a fixed Hermitian $H\in \End{\mathcal{B}^{\otimes q}}$ such that $\sigma (H) \subseteq \Z + c$ for some constant $c \in \R$, choose any integer $\beta \geq 2 \ell(\sigma(H)) + 1$ and set $\beta_{r, l} := \beta^{l-1 + L\cdot (r-1)} \in \N$, $H_{r,l}:= \beta_{r,l} \cdot H$. This is defined as the \textbf{Golomb encoding}. The frequency spectrum of the QNN with this Golomb encoding is then maximal in size if and only if $\sigma(H)$ is a Golomb ruler.
\end{theorem}
A proof can be found in Appendix \ref{appendix: maximality proofs arbitrary dim}.
Note that we can recover the results of the previous section for $k=2$. In this case $q=1$ and therefore $|\Omega_{\max}| = 3^{RL}$. Moreover, since $k<5$, using a perfect Golomb ruler and minimal $\beta_r$ as in the proof of Theorem \ref{thm: maximality with Golomb Ruler}, there are no gaps in the frequency spectrum, thus obtaining the results of Theorem \ref{theorem: maximal frequency spektrum, arbitrary S}. If $q=R$ and therefore $k=d$, the result matches that in \cite{Peters2022GeneralizationDO} for single layer models.

\subsection{Relaxed Turnpike Problem}
Although a Golomb ruler is the right concept to maximise the size of the frequency spectrum, it is unsuitable for maximising $K \in \N$ such that $\Z_K \subseteq \Omega$. However, for the universality result in \cite{Schuld2020EffectOD}, it was crucial to have a wide range of integers without gaps in the frequency spectrum, hence such a maximally large $K$. Technically speaking, whereas in the Golomb ruler setting we were looking for a set $S$ such that $\Delta S$ has no degeneracy, here we are looking for a set $S\subseteq \Z$ of a given size $d$ such that $\Z_K \subseteq \Delta S$ for a maximally large $K\in \N$.
This task is somewhat related to what is known as the \textbf{turnpike problem} or the \textbf{partial digest problem} \cite{dakic2000turnpike}.
\begin{definition}[Turnpike Problem]
    Given a multiset $M$ of $\binom{d}{2}$ integers, find a set $S \subseteq \Z$ (of size $d$) such that $\Delta S = M$ (here, the difference is also understood as a multiset) if possible, else prove that there is no such set $S$.
\end{definition}
The turnpike problem occurs for example in DNA analysis \cite{Dix1988ErrorsBS}, X-ray crystallography \cite{Patterson1935ADM, Patterson1944AmbiguitiesIT} and other fields. Algorithms have been proposed to solve the turnpike problem based on, for example, backtracking \cite{skiena1990reconstructing} or factoring polynomials with integer coefficients \cite{lemke1988complexity}.

However, the turnpike problem is not precisely what we need here in our setting. For the frequency spectrum, the number of occurrences of a frequency is irrelevant, so here we only need M to be a set. Furthermore, we do not want to set $\Z_K = M$, but rather $ \Z_K \subseteq M$, hence relaxing some constraints. For this reason, we call this modification the \textbf{relaxed turnpike problem}.
\begin{definition}[Relaxed Turnpike Problem]
    Given $d \in \N$, find a set $S\subseteq \Z$ of size $d$ such that 
    \begin{align}
        K(S) := \max \lset{K \in \N_0|\ \Z_K \subseteq \Delta S}
    \end{align}
    is maximal, i.e. for all sets $S'\subseteq \Z$ of size $d$ one has $K(S') \leq K(S)$. $S$ is called a solution of the relaxed turnpike problem. Note that always $K(S) \leq \binom{d}{2}$. 
\end{definition}
To find a solution to the relaxed turnpike problem, we propose a not necessarily efficient algorithm. A proof can be found in Appendix \ref{appendix: maximality proofs arbitrary dim}.
\begin{theorem}[Algorithmic Solution of the Relaxed Turnpike Problem]
\label{theorem: algorithmic solution relaxed turnpike problem}
    For a given size $d\in \N$, define the finite set of candidates 
    \begin{align}
        C := \lset{\lset{s_1,s_2,...,s_d} \subseteq \Z|\ 0 < s_{i+1} - s_i \leq \binom{d}{2} \ \forall i=1,...,d-1, \ s_1 = 0}.
    \end{align}
    Then $C$ contains a solution to the relaxed turnpike problem. It can be found by iterating over all $S' \in C$ and returning the set $S$ with the largest $K(S')$.
\end{theorem}
Note that the size of the candidate set $C$ is $\binom{d}{2}^{d-1}$, making it impractical to iterate over all candidates, even for small $d$. For $d=1,\ldots, 8$ we computed all solutions of the relaxed turnpike problem, see Table \ref{tab: solution of the relaxed turnpike problem}.
\begin{table}[ht]
\centering
\begin{tabular}{c l c c r}
\toprule
\textbf{d} & \textbf{Example S}          & \textbf{K} & \textbf{Number of solutions}  & $\mathbf{|C|}$ \\ \toprule
1          & $\lset{0}$                  & 0          & 1    & 1                        \\ \midrule
2          & $\lset{0, 1}$               & 1          & 1    & 1                      \\ \midrule
3          & $\lset{0,1,3}$              & 3          & 2    & 9                    \\ \midrule
4          & $\lset{0,1,4,6}$            & 6          & 2    & 216                         \\ \midrule
5          & $\lset{0, 1, 2, 6, 9}$      & 9          & 8    & 10,000                     \\ \midrule
6          & $\lset{0, 1, 2, 6, 10, 13}$ & 13         & 14   & 759,375                     \\ 
\midrule
7          & $\lset{0, 2, 7, 13, 16, 17, 25}$ & 18         & 8 & 85,766,121
\\ 
\midrule
8          & $\lset{0, 8, 15, 17, 20, 21, 31, 39}$ & 24         & 2   & 13,492,928,512                     \\
\bottomrule
\end{tabular}
\caption{Example solution $S \in C$, $K=K(S)$ and number of solutions found in $C$ of the relaxed turnpike problem for $d =1,...,8$. The shown example is the solution with the lowest lexicographical order, defined from left two right.}
\label{tab: solution of the relaxed turnpike problem}
\end{table}
Note that in the context of QNNs, $d$ must be a power of $2$. For $d \leq 4$ the solutions are also perfect Golomb rulers, so the first case where the maximality in size and the maximality in $K$ differ is $d=8$.

With the previous thoughts we are prepared to return to the initial problem of searching for a QNN such that its frequency spectrum $\Omega$ contains $\Z_K$ with maximally large $K \in \N$. For a proof see Appendix \ref{appendix: maximality proofs arbitrary dim}.
\begin{theorem}[$K$-Maximally Frequency Spectrum]
\label{thm: maximality with turnpike}
    For a given dimension $d=2^R$, $K\in \N$ is maximal such that $\Z_K \subseteq \Omega$
    for some univariate single layer QNN with a $d$-dimensional generators $H \in \End{\BR}$ and frequency spectrum $\Omega$ if and only if the eigenvalues of $H$ are a solution of the relaxed turnpike problem.
\end{theorem}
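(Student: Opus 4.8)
The plan is to collapse the analytic content into the already-proven Fourier representation and then reduce a question about real eigenvalues to the purely integer relaxed turnpike problem. First I would specialise Theorem~\ref{theorem: Univariate QNN is Fourier} to $L=1$ with the single $d$-dimensional generator $H\in\End{\BR}$. This gives $\Omega=\Delta\sigma(H)$, so the largest $K$ with $\Z_K\subseteq\Omega$ is by definition exactly $K(\sigma(H))$, where $\sigma(H)$ denotes the set of \emph{distinct} eigenvalues of $H$. Repeated eigenvalues only shrink $\Delta\sigma(H)$, so a maximiser must have $|\sigma(H)|=d$. Since every $d$-element set of reals is the spectrum of some Hermitian $H\in\End{\BR}$ (take a diagonal matrix), maximising $K$ over all $d$-dimensional generators is the same as maximising $K(\sigma)$ over all sets $\sigma\subseteq\R$ with $|\sigma|=d$. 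Because $\Delta(\sigma+c)=\Delta\sigma$, the functional $K(\sigma)$ is translation invariant, and I may normalise $0\in\sigma$.

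The heart of the argument is to show that this real optimum equals the integer optimum defining the relaxed turnpike problem, i.e. that integer eigenvalues are never suboptimal. I would partition $\sigma$ by the relation $\lambda\sim\mu\iff\lambda-\mu\in\Z$. Only within-class pairs produce integer differences, so $\Delta\sigma\cap\Z=\bigcup_t\Delta C_t$ over the classes $C_1,\dots,C_m$; translating a representative of each $C_t$ to $0$ turns it into an integer set $S_t\subseteq\Z$ with $\Delta S_t=\Delta C_t$. Hence $\Z_K\subseteq\Delta\sigma$ forces $\{1,\dots,K\}\subseteq\bigcup_t\Delta S_t$. Placing the blocks far apart, $S:=\bigcup_t\left(S_t+tM\right)$ for a large integer $M$, yields an integer set of size $d$ whose difference set contains every $\Delta S_t$, while all cross-block differences exceed $M$ and therefore do not touch the segment $\{1,\dots,K\}$. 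Thus $K(S)\ge K(\sigma)$, giving $K^{\R}_{\max}\le K^{\Z}_{\max}$; the reverse inequality is trivial because integer sets are real sets. Consequently the global maximum of $K$ is precisely the optimum of the relaxed turnpike problem, which settles the ``if'' direction: a turnpike solution attains this common value and is therefore maximal.

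For the converse I would argue that a real maximiser must, after the normalisation $0\in\sigma$, already be a single integer class, hence literally a turnpike solution. The block construction shows a proper splitting into $m\ge2$ classes can never beat a single integer ruler, and a counting bound sharpens this: the positive integers appearing in $\bigcup_t\Delta S_t$ number at most $\sum_t\binom{|S_t|}{2}\le\binom{d-1}{2}$, which combined with the strict monotonicity of the turnpike optimum in $d$ (adjoining one far mark raises $K$ by at least one) excludes proper splits from reaching the maximum for the relevant dimensions $d=2^R$. I expect the main obstacle to be exactly this converse: controlling the possibility that a cleverly placed, non-integer or multi-block spectrum covers a longer initial segment $\{1,\dots,K\}$ than any single integer ruler. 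The class-decomposition-plus-recombination argument is what rules this out, while translation invariance is needed to identify the maximiser with a genuine element of $\Z$.
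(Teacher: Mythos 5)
Most of your proposal is sound, and it is in fact considerably more careful than the paper's own proof, which disposes of the theorem in two lines: it writes $\Omega=\Delta\sigma(H)$ and then declares that ``by definition'' $K$ is maximal iff $\sigma(H)$ solves the relaxed turnpike problem, silently identifying the optimum over \emph{real} spectra with the optimum over \emph{integer} sets (the relaxed turnpike problem is posed only for $S\subseteq\Z$). Your reduction $\Omega=\Delta\sigma(H)$, the diagonal realisation of any $d$-element real set as a spectrum, the exclusion of repeated eigenvalues via strict monotonicity of the turnpike optimum $K^*_d$, the translation normalisation, and the decomposition into integer-difference classes followed by the far-apart block construction together prove exactly the identification the paper takes for granted, and they settle the ``if'' direction correctly.

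The genuine gap is in your converse. To exclude proper splittings into $m\ge 2$ classes you invoke the counting bound $K(\sigma)\le\sum_t\binom{|S_t|}{2}\le\binom{d-1}{2}$ together with monotonicity of $K^*_d$; but this only excludes splits if $K^*_d>\binom{d-1}{2}$ (or $\binom{d-2}{2}+1$ after discarding singleton classes), and that inequality is neither proved in your argument nor true in general. The quantity $K^*_d$ is precisely the classical difference-basis function, for which Rédei--Rényi-type upper bounds give $K^*_d\le c\,d^2$ with $c\approx 0.41<\nicefrac{1}{2}$, whereas $\binom{d-1}{2}\sim d^2/2$; tabulated values show the needed inequality already fails around $d=10$ (where $K^*_{10}=35<36=\binom{9}{2}$) and in particular at the relevant dimension $d=16$ (known values near $90$, versus $\binom{15}{2}=105$). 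The paper's table, which stops at $d=8$ where $24>21$ does hold, is misleading on this point; your argument covers at most $R\le 3$. The repair is easy and uses only your own construction: if $\sigma$ splits into $m\ge2$ classes with $K:=K(\sigma)$, translate the integer classes so that the second block begins exactly $K+1$ above the maximum of the first block (remaining blocks placed far away). The resulting integer set $S$ still has size $d$, contains all differences $\lset{1,\ldots,K}$ from within the blocks, and now also realises the cross difference $K+1$, so $\Z_{K+1}\subseteq\Delta S$ and hence $K^*_d\ge K(\sigma)+1$. Thus every multi-class spectrum is \emph{strictly} suboptimal, and a maximiser must be a translate of an integer set solving the relaxed turnpike problem --- no counting and no knowledge of the values of $K^*_d$ required.
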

The previous theorem cannot be extended in a trivial way to QNNs with $k$-dimensional generators and an arbitrary number of layers. By scaling we can construct a QNN such that $\Z_{K'} \subseteq \Omega$ with $K'=\nicefrac{((2K+1)^{\nicefrac{L \cdot R}{q}}-1)}{2}$, where $K=K(S)$ and $S$ is a solution of the relaxed turnpike problem of size $k$, but it is not guaranteed that this $K'$ is maximal. This is because there may be smaller scalings, such that the scaled spectra do not have to be disjoint, but would fill in gaps between each other and thus generate a larger $K'$. However, the scaling method gives us a further construction method for an encoding, which yields a frequency spectrum with a large $K'$, such that $\Z_{K'} \subseteq \Omega$. 
\begin{theorem}[Extension to more General Settings]
\label{thm: maximality with turnpike arbitrary k and L}
    Let $k=2^q$ with $q|R$ and let $H \in \End{\B^{\otimes q}}$ be Hermitian, such that $\sigma(H)$ is a solution of the relaxed turnpike problem of size $k$. Further, let $K:= K(\sigma(H))$, $\beta_{r, l} := (2K+1)^{l-1 + L\cdot(r-1)}$ and $H_{r,l} := \beta_{r,l} \cdot H$. This is defined as the \textbf{turnpike encoding}. Then
    \begin{align}
        \Z_{\frac{(2K+1)^{\nicefrac{R \cdot L}{q}}-1}{2}}\subseteq \Omega.
    \end{align}
    for the frequency spectrum $\Omega$ of the QNN with the $k$-dimensional generators $H_r$.
\end{theorem}
Note that the turnpike encoding can be used for arbitrary Hamiltonians $H$ without requiring $\sigma(H)$ to be a solution to the relaxed turnpike problem. However, in this case, maximality is no longer guaranteed.
The proof is again given in Appendix \ref{appendix: maximality proofs arbitrary dim}.
\section{Numerical Examples} 
\label{sec:examples}
We present numerical examples illustrating the frequency spectra associated with different encoding schemes for QNNs.
In Table~\ref{tab: numerical examples frequency spectra}, we report the size of the frequency spectrum $\Omega$ and the largest integer $K$ for which $\mathbb{Z}_K \subseteq \Omega$, for various encodings and small circuit shapes.
\begin{table}[!htbp]
\begin{tabular}{ll|rr|rr|rr|rr}
\toprule
\multicolumn{1}{c}{} & \multicolumn{1}{c}{} & \multicolumn{2}{c}{(3, 1)} & \multicolumn{2}{c}{(2, 2)} & \multicolumn{2}{c}{(3, 2)} & \multicolumn{2}{c}{(4, 2)} \\
Encoding & H & $|\Omega|$ & $K$ & $|\Omega|$ & $K$ & $|\Omega|$ & $K$ & $|\Omega|$ & $K$ \\
\midrule
Hamming & $\nicefrac{P}{2}$ & 7 & 3 & 9 & 4 & 13 & 6 & 17 & 8 \\
Binary & $\nicefrac{P}{2}$ & 15 & 7 & 31 & 15 & 127 & 63 & 511 & 255 \\
Ternary & $\nicefrac{P}{2}$ & 27 & 13 & 81 & 40 & 729 & 364 & 6561 & 3280 \\
Equal Layers & $\nicefrac{P}{2}$ & 27 & 13 & 25 & 12 & 125 & 62 & 625 & 312 \\
Golomb & $S_4$ & - & - & 169 & 84 & - & - & 28561 & 14280 \\
 & $G_8$ & 57 & 15 & - & - & 3249 & 15 & - & - \\
Turnpike & $S_4$ & - & - & 169 & 84 & - & - & 28561 & 14280 \\
 & $T_8$ & 53 & 24 & - & - & 2617 & 1200 & - & - \\
\bottomrule
\end{tabular}
\caption{Numerical examples of the frequency spectrum size $|\Omega|$ and the maximal integer $K \in \mathbb{N}_0$ such that $\mathbb{Z}_K \subseteq \Omega$ for different encodings and circuit shapes. Here, $P \in \{X,Y,Z\}$ denotes an arbitrary Pauli matrix. The sub-generators are given by $S_4 := \diag(0,1,4,6)$, $G_8 := \diag(0,1,4,9,15,22,32,34)$, and $T_8 := \diag(0,8,15,17,20,21,31,39)$. If $R \equiv 0 \ (\mathrm{mod}\ 2)$, we use $S_4$ as the sub-generator for both the Golomb and turnpike encodings; otherwise, we use $G_8$ for the Golomb encoding and $T_8$ for the turnpike encoding.
}
\label{tab: numerical examples frequency spectra}
\end{table}

For $R \equiv 0 \mod 3$, we employed distinct sub-generators for Golomb and turnpike encoding, 
\begin{align}
    G_8:=\diag(0, 1, 4, 9, 15, 22, 32, 34) \text{ and } T_8:=\diag(0, 8, 15, 17, 20, 21, 31, 39).
\end{align}
The spectrum of $G_8$ constitutes a Golomb ruler, whereas that of $T_8$ is a solution of the relaxed turnpike problem. Both operators act on three qubits, providing the smallest instance in which Golomb and turnpike encodings diverge. As expected, the Golomb encoding yield a larger frequency spectrum in size, while the number $K$ such that $\Z_K \subseteq \Omega$ is much larger for the turnpike encoding.

The distribution of the frequencies for specific shapes and encodings can be found in Figure \ref{fig:spectrum and degenerecies}.
\begin{figure}[ht]
    \centering
    \includegraphics[width=1.0\linewidth]{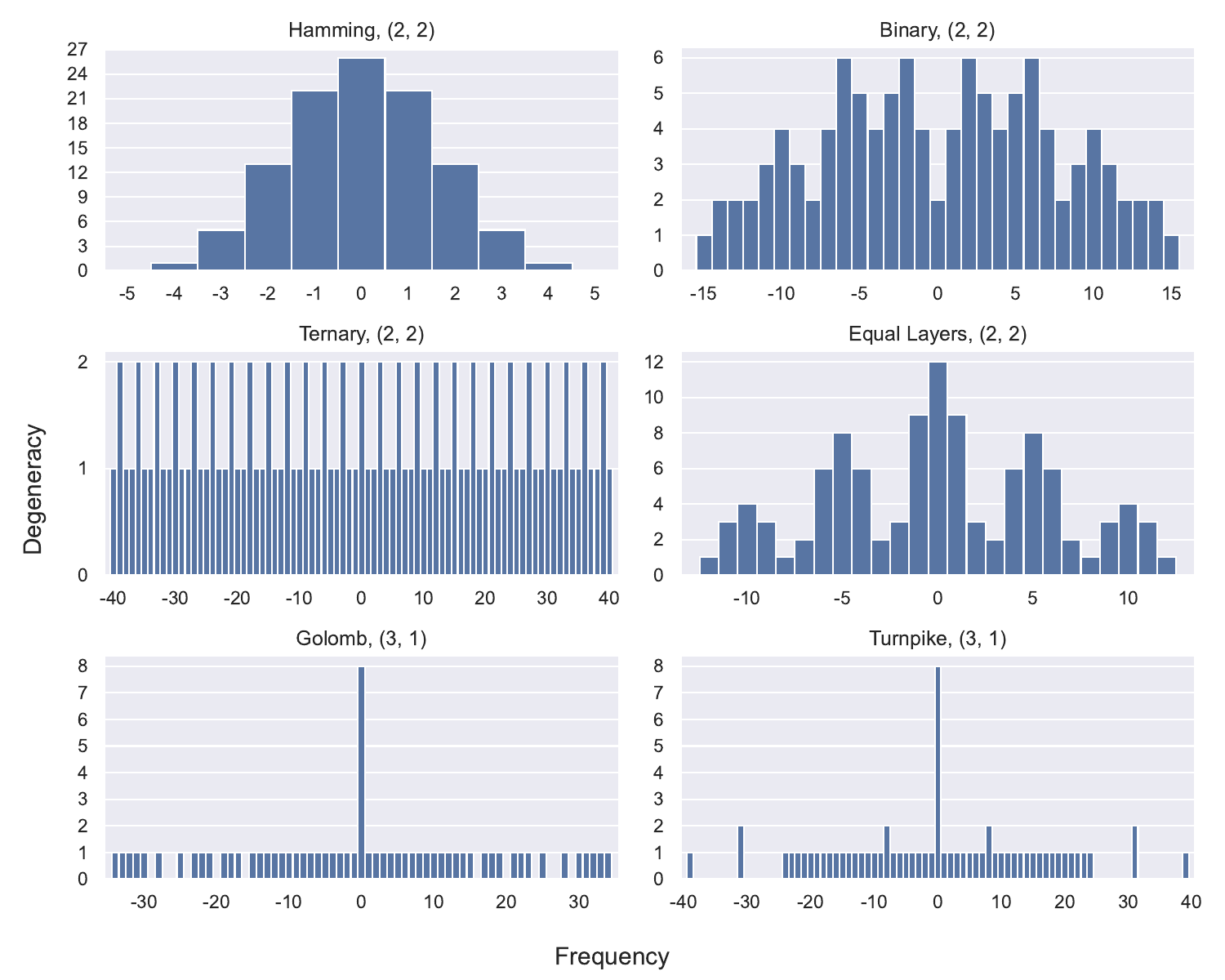}
    \caption{Frequency distributions for the considered QNN encoding schemes. The degeneracy indicates the number of distinct parameter combinations giving rise to the same frequency, i.e., the number of Fourier terms sharing that frequency. As in Table~\ref{tab: numerical examples frequency spectra}, we use $H = G_8$ for the Golomb encoding and $H = T_8$ for the turnpike encoding; for all other encodings, we set $H = \nicefrac{Z}{2}$. The circuit shape is $(2,2)$ for the Hamming, binary, ternary, and equal-layer encodings, and $(3,1)$ for the Golomb and turnpike encodings.
}
    \label{fig:spectrum and degenerecies}
\end{figure}
Whereas Hamming, binary, and equal-layer encodings exhibit narrower frequency spectra with higher degeneracy, ternary, Golomb, and turnpike encodings yield broader spectra with negligible degeneracy. Increased degeneracy generally provides greater flexibility, enabling the QNN to approximate specific frequencies more accurately, at the cost of reduced accessibility to higher frequencies.

We demonstrate the approximation quality for each encoding on a small toy example first. We fitted the quantum models of shape $(3, 1)$ on the target function 
\begin{align}
    g(x) = \frac{1}{20}\sum_{n=1}^9 \sin(n \cdot x)
\end{align}
for each encoding. We scaled the target function such that $g(x) \in [-\nicefrac{1}{2},\nicefrac{1}{2}]$, as models with Pauli strings as observables only can take values in the interval $[-1,1]$ and fitting them to target functions taking values on the boundaries $g(x) = \pm 1$ seems to be quite hard.
For the experiments, we implemented the QNNs with Python 3.12.7 and mainly the JAX \cite{jax} and pennylane \cite{pennylane} libraries. We discretized the interval $[-\pi, \pi]$ into 1000 equidistant points on which $g(x)$ was evaluated. We then used this as our training targets with \textit{Mean Squared Error (MSE)} as loss function. For optimization, we used the ADAM optimizer \cite{adam2014method} with $b_1=0.9$, $b_2=0.999$, $\epsilon = 10^{-8}$ and a learning rate of $\eta = 10^{-4}$. The results of the noise-free simulation can be found in Figure \ref{fig:approximation demonstration}.
\begin{figure}[ht]
    \centering
    \includegraphics[width=1\linewidth]{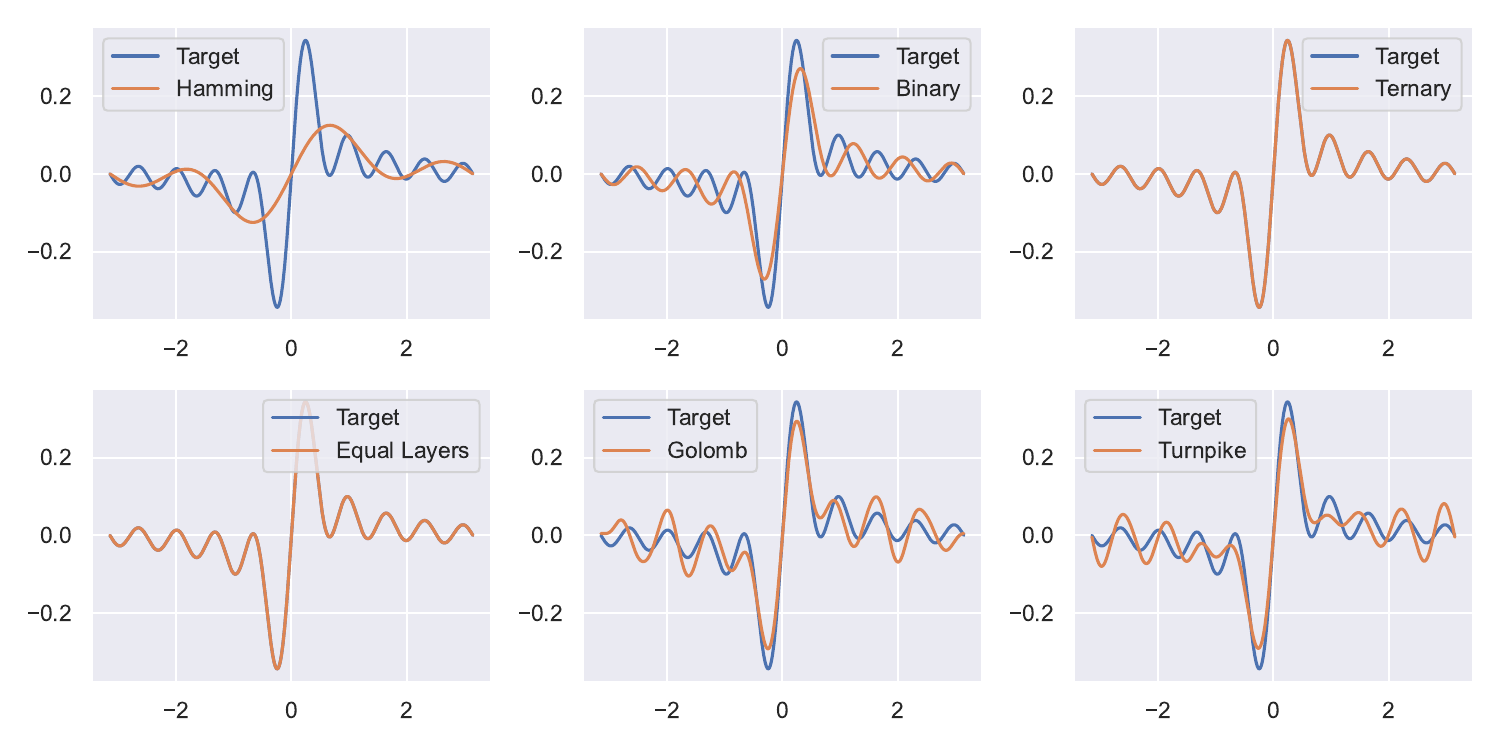}
    \caption{Approximation of the target function $g(x) = \frac{1}{20}\sum_{n=1}^9 \sin(n x)$ using different encoding schemes. All models have circuit shape $(3,1)$. The sub-generator $H$ is chosen as $\nicefrac{Z}{2}$ for the Hamming, binary, ternary, and equal-layer encodings, $G_8$ for the Golomb encoding, and $T_8$ for the turnpike encoding. The interval $[-\pi,\pi]$ is discretized into 1000 equidistant points, which are used as training targets. Model training is performed using ADAM with $b_1=0.9$, $b_2=0.999$, and $\epsilon = 10^{-8}$, treating the full dataset as a single batch per epoch. The encoding unitaries $W_{\boldsymbol{\theta}}^{(l)}$ are implemented using Pennylane’s \textit{StronglyEntanglingLayers} with 10 entangling layers. All models are trained for 10,000 epochs with learning rate $10^{-4}$ and mean squared error loss.
}
    \label{fig:approximation demonstration}
\end{figure}

According to Table \ref{tab: numerical examples frequency spectra}, the frequency spectra of Hamming and binary encoding for shape $(3,1)$ are too small to cover all frequencies, which can also be seen from the results in the figure. Since only a single layer is used, ternary and equal-layer encodings coincide. Both reproduce the target function exactly and are therefore capable of recovering its full frequency spectrum. While the frequency spectrum of Golomb and turnpike encodings is large enough in theory, the model apparently has problems approximating the objective function and thus all contained frequencies sufficiently well. Here, the large frequency spectrum apparently comes at the expense of approximation quality.
This is consistent with observations from \cite{Mhiri:2024abv} and an example of, what the authors call, frequency redundancy in the Fourier series spectrum.

We further test the impact of the specific shape $(R, L)$ at a fixed area $A = R \cdot L$ for different target functions. For that, again we discretized the interval $[-\pi, \pi]$ into $1000$ equidistant points and evaluated different target functions $g_i(x)$ on it. As target functions we have chosen
\begin{align}
    g_1(x) & = \frac{\left(-49x^{4} + 9x^{3} + 360x^{2} + 486\right)}{2000}, &
    g_2(x) &= \frac{1}{20}\sum_{n=1}^9 \sin(n \cdot x), & \\
    g_3(x) &= \frac{1}{10}\text{ReLU}(x) \text{ and} &
    g_4(x) &= \frac{\text{sign}(x)+1}{4}, &
\end{align}
where $ReLU(x)=\max\{0, x\}$ and $\text{sign}(x) = \nicefrac{x}{|x|}$ for $x \neq 0$ and $\text{sign}(0) = 0$.
Due to the implementation of pennylanes \textit{StronglyEntanglingLayers}, a model has
\begin{align}
    \# \text{Param} = 3 \cdot R \cdot (L+1) \cdot n_{\text{entangling layers}} = 3\cdot (A + R)\cdot n_{\text{entangling layers}}
\end{align}
many trainable parameters, where the factor $3$ arises from the implementation details of \textit{StronglyEntanglingLayers} and $n_{\text{entangling layers}}$ denotes the internal repetitions of layers of \textit{StronglyEntanglingLayers} and is therefore a hyperparameter we have to set. The formula is slightly asymmetric in $R$ and $L$ since there are always $L+1$ many parameter encoding unitaries in a QNN and the number is therefore not only dependent on the area $A$. To have a fair comparison for trainability, we have to choose the hyperparameter $n_{\text{entangling layers}}$ for each shape $(R,L)$ individually such that each model has roughly the same number of trainable parameters. We fix $A=6$ and determine $n_{\text{entangling layers}}$ for each $R$ according to Table \ref{tab:number entangling layers experiment}. 
\begin{table}[ht]
    \centering
    \begin{tabular}{c|c|c}
        $R$ \quad & $n_{\text{entangling layers}}$ & Number of trainable parameters\\
        \hline
        1 & 7 & 147\\
        2 & 6 & 144\\
        3 & 5 & 135\\
        6 & 4 & 144\\
    \end{tabular}
    \caption{Chosen number of entangling layers and resulting number of trainable parameters for each $R$ in the experiments for testing the trainability of models with the fixed area $A=6$.}
    \label{tab:number entangling layers experiment}
\end{table}
Again, we trained the models using $\nicefrac{Z}{2}$ as the sub-generator for 10,000 epochs with the ADAM optimizer and a learning rate of $\eta = 0.0001$ for the Hamming, binary, and ternary encodings. As loss function we used the \textit{Mean Squared Error (MSE)}. We omitted the equal-layer encoding in this experiment, as the property of having identical data-encoding layers is not necessarily preserved under area-preserving transformations. The Golomb and turnpike encodings were also excluded, since they would only be applicable to the shapes $(3, 2)$ and $(6, 1)$ for $k = 3$, and to $(2, 3)$ and $(6, 1)$ for $k = 2$, providing limited numerical insights for the trainability of models with different shapes but same area. We also attempted training with $A = 12$, but models with this larger area frequently
failed to achieve satisfactory loss values during training. To reduce the influence of randomness, we trained 10 models for each encoding and shape pair. The results shown in Figure \ref{fig:experiment results spectral invariance best run} correspond to the run (out of 10) with the smallest final MSE loss for each experimental configuration. In Figure \ref{fig:experiment results spectral invariance boxplot} we summarize the results of all 10 runs for each encoding, shape and function in a boxplot.
\begin{figure}
    \centering
    \includegraphics[width=1\linewidth]{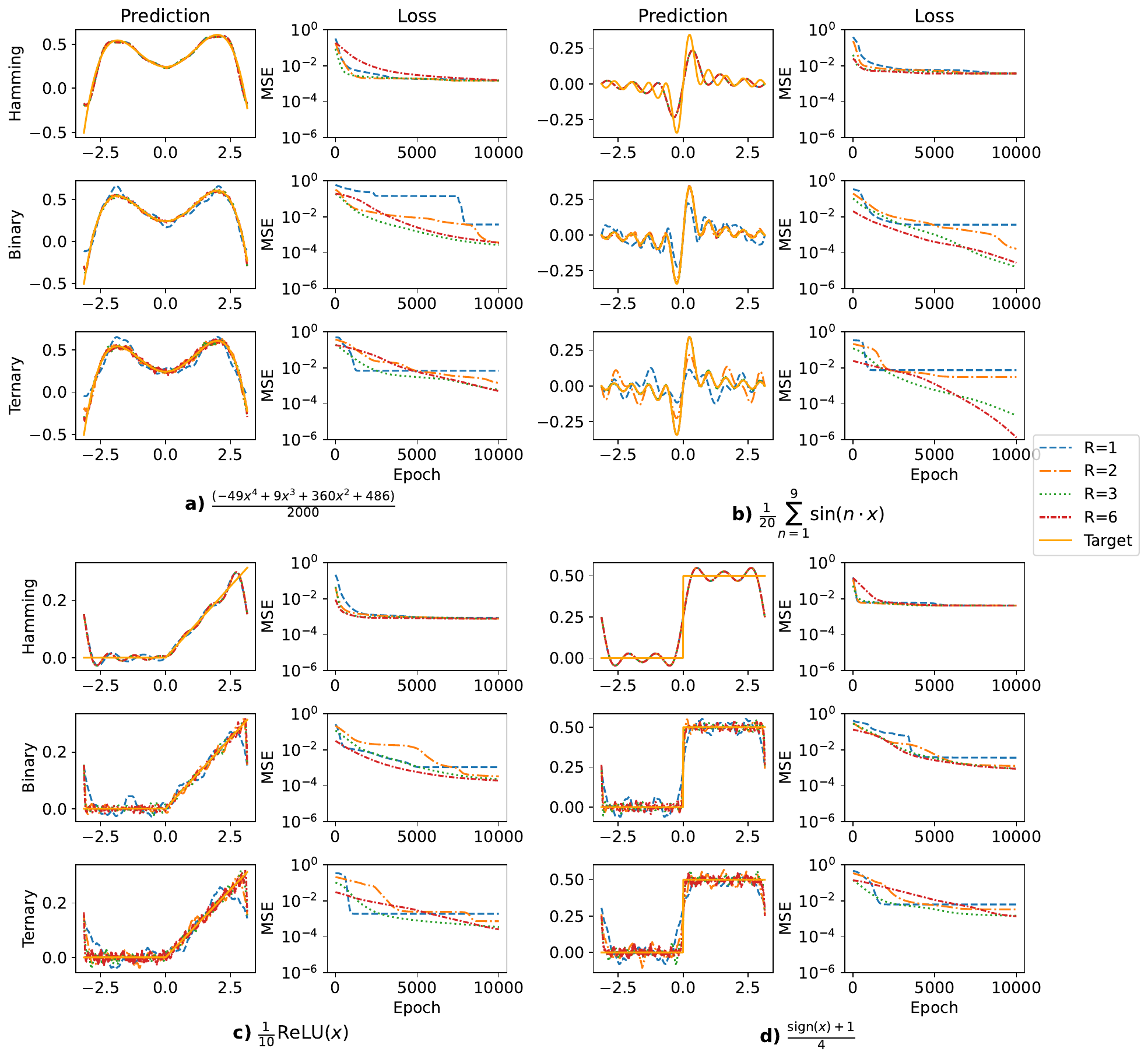}
    \caption{Experimental results for all four tested target functions $g_i(x)$. The input domain $[-\pi, \pi]$ was discretized into 1000 equidistant points, and the models were trained using the MSE loss for 10,000 epochs with a learning rate of $\eta = 0.0001$. The area was fixed to $A = R \cdot L = 6$, and $n_{\text{entangling layers}}$ was adjusted for each shape $(R, L)$ to ensure comparable numbers of trainable parameters. We tested Hamming, binary, and ternary encodings for $R = 1, 2, 3, 6$. For each configuration and model, ten training runs were performed, and the model with the lowest final loss was selected. Shown are the resulting predictions and corresponding loss curves for each target function.
}
    \label{fig:experiment results spectral invariance best run}
\end{figure}
\begin{figure}
    \centering
    \includegraphics[width=1\linewidth]{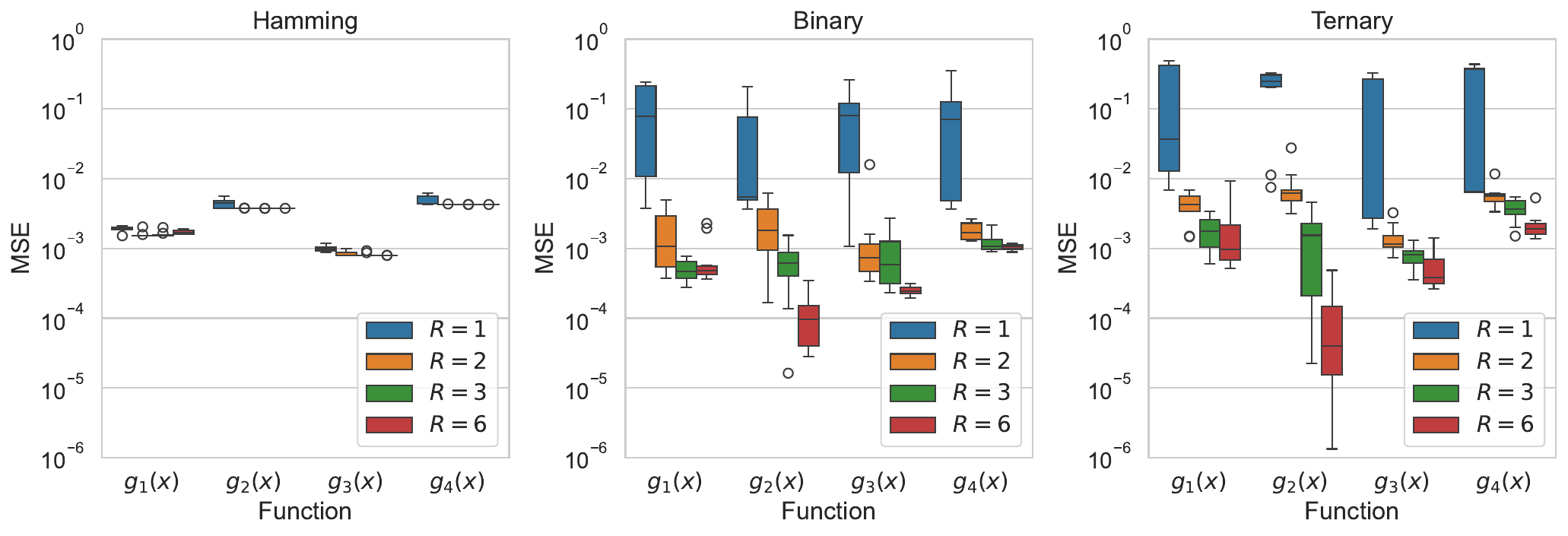}
    \caption{Boxplot summary of the final MSE losses over 10 independent training runs for each target function $g_i(x)$, encoding scheme (Hamming, binary, ternary), and shape $(R, L)$ with fixed area $A = 6$. Each boxplot displays the distribution of final MSE values after 10,000 epochs of training. Lower MSE values indicate better approximation of the target function.
}
    \label{fig:experiment results spectral invariance boxplot}
\end{figure}

For each shape $(R, L)$ and each encoding, the best-performing model is able to approximate the respective target function to a certain extent, as illustrated in Figure \ref{fig:experiment results spectral invariance best run}. However, consistently comparable performance across all shapes $(R, L)$ is only observed for the Hamming encoding. In contrast, for the binary and ternary encodings, the training behavior and convergence quality depend noticeably on the specific choice of $(R, L)$.

In particular, for smaller values of $R$, the optimization tends to stagnate at higher loss levels compared to configurations with larger $R$. In some cases, for example for the target function $g_1(x)$ with binary encoding, the optimization process initially appears to stagnate, subsequently improves after a certain number of iterations, and then stabilizes again at a lower loss level. For configurations with $R \geq 3$, the loss curves are generally smoother and exhibit a more consistent decrease over the course of training.

While Figure \ref{fig:experiment results spectral invariance best run} displays only the best-performing model, i.e., the model achieving the lowest final loss for each configuration, Figure \ref{fig:experiment results spectral invariance boxplot} summarizes the results over all $10$ training runs per configuration. These aggregated results further support the observation that, for the Hamming encoding, the approximation quality is largely independent of the specific model shape. In contrast, models employing binary or ternary encodings tend to achieve better overall performance for larger values of $R$ across all considered target functions.

In summary, a noticeable difference in the trainability of the models can be observed depending on the choice of the shape $(R, L)$. While the frequency spectrum of the models depends only on the area $A$ and not on the specific choice of the shape $(R, L)$, at least for a fixed encoding, the shape nevertheless appears to influence the trainability of the model, except for the encoding with the smallest frequency spectrum among those considered, namely the Hamming encoding. In our experiments, larger values of $R$ tend to yield more favorable optimization behavior than smaller ones. This does not contradict the results from Section \ref{sec: Spectral Invariance Under Area-Preserving Transformations}, since that section concerns only the frequency spectrum and not the trainability or universal equivalence of the models and their corresponding Fourier series.

The practical relevance of these findings lies primarily in improving our understanding of the structure of the frequency spectrum. Models with an encoding that produces a large frequency spectrum per area require fewer resources, meaning a smaller number of qubits and layers and therefore a smaller area $A$, than encodings with smaller frequency spectra per area. However, our experiments suggest that a large frequency spectrum can also make it more difficult for the models to approximate the target function accurately or to train effectively, even if the theoretical expressivity is sufficient. Therefore, in practice, one must empirically balance whether, for a given task, models with a smaller area and an encoding leading to a larger frequency spectrum per area, or conversely models with a larger area and higher resource consumption but a slightly smaller frequency spectrum per area, are more suitable. Maximizing the frequency spectrum is therefore not an objective in itself, rather, it delineates the boundaries within which an appropriate model selection must take place.

\section{Conclusion and Outlook} 
\label{sec:chapter_conclusion}
In this work, we have examined the properties of the frequency spectrum of QNNs in more detail. We focused on models whose approach consists of alternating parameterized encoding layers.

First, we discussed the representability of QNNs by finite Fourier series, which is a well established fact. It was observed that the frequency spectrum of the multivariate model is the Cartesian product of the frequency spectrum of univariate models. Therefore, it was sufficient to study only univariate models.
We then introduced the concept of spectral invariance under area-preserving transformations. This states that the frequency spectrum of a QNN is invariant under certain transformations of the ansatz that leave the area $A=RL$ unchanged, where $R$ is the number of qubits and $L$ the number of layers. The consequence of this is that, from a frequency spectrum point of view, the frequency spectrum depends only on the area $A=RL$ and not on the individual $R$ and $L$ without further restrictions to the layers. In addition, one can consider single-layer or single-qubit models and extend frequency-spectrum results to QNNs of arbitrary shape. This explains various symmetry results in $R$ and $L$ in the literature.

Additionally, we have comprehensively clarified how the generators of a QNN must be chosen in order to obtain a maximal frequency spectrum for a given area $A$. For that, we have distinguished two cases of maximality, namely maximal in the size $|\Omega|$ of the frequency and maximal in $K$ such that $\Z_K \subseteq \Omega$.
To maximize the size $|\Omega|$ of the frequency spectrum, we extended existing results to QNNs with arbitrary number of layers and arbitrary dimensional generators based on the concept of a Golomb ruler.
For maximality in $K$, we introduced the relaxed turnpike problem, which is a variation of the classical turnpike problem, and proved that a univariate single layer QNN with one generator is maximal if and only if the eigenvalues are a solution to the relaxed turnpike problem. 

Our work characterizes and extends existing results on the maximal frequency spectrum attainable by QNNs, thereby providing a theoretical bound on the class of target functions that can be approximated for a given model architecture. While this does not yet address how well a target function within the approximable class can actually be represented, particularly with respect to the choice of encoding, our systematic approach and notion establishes a foundation for a deeper theoretical understanding of QNNs as models for quantum machine learning.

These findings have broader implications for Quantum Machine Learning. The invariance of the frequency spectrum with respect to the area 
$A=R \cdot L$ suggests that resources in QNN design can be traded more flexibly than commonly assumed: from a spectral standpoint, adding layers or adding qubits contributes equally to enlarging the frequency spectrum. This observation is particularly relevant in the NISQ era, where the availability of qubits is often severely limited while circuit depth is constrained by noise. Spectral invariance indicates that, under certain conditions, one can compensate a lack of qubits with deeper circuits or vice versa, without sacrificing expressible frequencies. 

Our experiments, however, show that the shape does indeed have an influence on the trainability of the models. While a smaller number of qubits appears to be advantageous for simple target functions, models tend to become trapped in plateaus more frequently during the training of more complex target functions when fewer qubits are used, at least for encodings with a large frequency spectrum per area. This demonstrates that the shape can have a tangible effect on model training, although the nature of this influence depends on the complexity of the target function. Furthermore, our experiments indicate that, in practice, a trade-off must be made between encodings with a relatively small frequency spectrum per area, requiring a larger area and thus higher resource consumption, and encodings with a large frequency spectrum per area but smaller overall area $A$.

Naturally, our study also has limitations. We have focused exclusively on the theory of the frequency spectrum of Quantum Neural Networks, leaving aside the distribution of Fourier coefficients and their impact on optimization landscapes. While we have examined the practical expressivity and trainability of the models using several examples, a systematic understanding of the influence of the shape and the encodings on the training process remains an open question. Furthermore, we expect that our results extend to other classes of models based on parameterized quantum circuits, but this was beyond the scope of the present work and should be investigated in future research.

Although the focus was solely on the frequency spectrum, future work could expand the area-preserving transformations to include a transformation of the parameter encoding layers $W^{(l)}$ and the observable $M$. This would allow to study their impact on the entire Fourier series and could provide insights into the expressiveness and trainability of QNNs. An extension to other architectures of parametrized circuits could also be fruitful as a next step for further insights.

\backmatter

\section*{Declarations}
\subsection*{Funding}
This work was partially funded by the BMWK project EniQmA (01MQ22007A).
\subsection*{Data Availability}
Not applicable.
\subsection*{Author Contribution}
Patrick Holzer and Ivica Turkalj contributed equally to this work.
\subsection*{Competing Interests}
The authors declare no competing interests.

\begin{appendices}

\section{Frequency Spectrum of Multivariate QNNs}
\label{appendix:frequency spectrum of multivariate models}
First we give a proof of Theorem \ref{theorem: Univariate QNN is Fourier} for the frequency spectrum of the univariate Model.
\begin{proof}
    For each $l=1,\ldots,L$ let $\lambda_0^{(l)}, ..., \lambda_{d-1}^{(l)}$ be the eigenvalues of $H_l$.
    The action of $W^{(l)}$ and $ e^{-ix H_l}$ on some basis vector $|j\rangle$ for $j \in [d]$ are given by
    \begin{align*}
        W^{(l)}|j\rangle = \sum_{i=0}^{d-1} W_{i, j}^{(l)} |i\rangle
    \end{align*}
    and
    \begin{align*}
       S_l(x)|j\rangle = e^{-ix H_l}|j\rangle = e^{-ix \lambda_j^{(l)}} |j\rangle,
    \end{align*}
    where $ W_{i, j}^{(l)} := \langle i |  W^{(l)}|j\rangle \in \C$.
    Hence
    \begin{align*}
        U(x) |0\rangle &= W^{(L+1)}S_L(x) \cdots S_1(x) W^{(1)} | 0\rangle \\
        &=\sum_{j_1=0}^{d-1} W_{j_1, 0}^{(1)} \left( W^{(L+1)}S_L(x) \cdots S_1(x) |j_1 \rangle\right) \\
        &=\sum_{j_1=0}^{d-1}\sum_{j_2=0}^{d-1} W_{j_2, j_1}^{(2)} W_{j_1, 0}^{(1)} \cdot e^{-ix \lambda_{j_1}^{(1)}}\left( W^{(L+1)}S_L(x) \cdots S_2(x) |j_2 \rangle\right) \\
        &=\sum_{j_1=0}^{d-1} \sum_{j_2=0}^{d-1}  W_{j_2, j_1}^{(2)} W_{j_1, 0}^{(1)} \cdot e^{-ix (\lambda_{j_1}^{(1)}+\lambda_{j_2}^{(2)})} \left( W^{(L+1)}S_L(x)\cdots W^{(3)}|j_2 \rangle\right) \\
        &\vdots \\
        &= \sum_{j_{L+1}=0}^{d-1}\sum_{\bj \in [d]^{L}}\left( \prod_{l=1}^{L+1} W_{j_l, j_{l-1}}^{(l)} \right) e^{-ix \Lambda_{\bj}} |j_{L+1} \rangle,
    \end{align*}
    where $j_0 := 0$ and $\Lambda_{\bj} := \sum_{l=1}^L \lambda_{j_l}^{(l)} \in \sum_{l=1}^L \sigma (H_l)$. This yields
    \begin{align*}
        \langle 0| U^\dagger(x)  = \sum_{j_{L+1}=0}^{d-1} \langle j_{L+1}| \sum_{\bj \in [d]^{L}} \left(\prod_{l=1}^{L+1} \left(W^\dagger\right)_{j_{l-1}, j_l}^{(l)}\right) e^{ix \Lambda_{\bj}}.
    \end{align*}
    Hence
    \begin{align*}
        f(x) = \langle 0| U^\dagger(x) M U(x) |0\rangle  = \sum_{\bj, \bk \in [d]^{L}} a_{\bk, \bj }e^{ix (\Lambda_{\bk}-\Lambda_{\bj})}
    \end{align*}
    with 
    \begin{align*}
        a_{\bk, \bj} = \sum_{k_{L+1}, j_{L+1} = 0}^{d-1} \left(\prod_{l=1}^{L+1} W_{k_l, k_{l-1}}^{(l)}\right) \cdot \left(\prod_{l=1}^{L+1} \left(W^\dagger\right)_{j_{l-1}, j_l}^{(l)}\right) \cdot M_{j_{L+1}, k_{L+1}}
    \end{align*}
    and $\bk = (k_1,...,k_L), \bj = (j_1,...,j_L) \in [d]^L$.
    If we group all terms with the same frequencies together, we obtain
    \begin{align*}
        f(x) =\sum_{\omega \in \Omega} c_\omega e^{i\omega x}
    \end{align*}
    where
    \begin{align*}
        c_\omega :=  \sum_{\substack{\bj, \bk \in [d]^{L} \\ \Lambda_{\bk} - \Lambda_{\bj} = \omega}} a_{\bk, \bj }
    \end{align*}
    and
    \begin{align}
    \Omega = \lset{\Lambda_{\bk} - \Lambda_{\bj}|\bj, \bk \in [d]^L} = \Delta \sum_{l=1}^L \sigma (H_l).
\end{align}
By Lemma \ref{lemma:minkowski properties}, the difference $\Delta$ can be shifted inside the sum, proving the claim.
\end{proof}
We extend Theorem \ref{theorem: Univariate QNN is Fourier} to multivariate models and give a proof of Theorem \ref{theorem: frequency spectrum of multivariate QNN is rectangular}.
First, we begin with models with a parallel ansatz.
\begin{theorem}[Multivariate Frequency Spectrum - Parallel Ansatz]
\label{theorem: Multivariate QNN is Fourier - parallel ansatz}
Let  $f(x) = \langle 0 | U^\dagger (\bx) M U(\bx)|0\rangle$ be a multivariate QNN with a parallel ansatz.
Then
\begin{align}
    f(\bx) = \sum_{ \bomega \in \bOmega} c_{\bomega}  e^{-i \bomega \cdot \bx},
\end{align}
with
\begin{align*}
    \bOmega = \sum_{l=1}^L \Delta(\sigma(H_{l, 1})\times\ldots \times \sigma(H_{l, N})).
\end{align*}
\end{theorem}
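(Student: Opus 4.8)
The plan is to mirror the proof of Theorem~\ref{theorem: Univariate QNN is Fourier} almost line by line, with scalars promoted to vectors: the scalar product $\omega x$ becomes the dot product $\bomega \cdot \bx$, and the scalar eigen-phases become dot products of $\bx$ with eigenvalue \emph{vectors}. The only genuinely new structural input is that the tensor-product form $S_l(\bx) = \bigotimes_{n=1}^N e^{-i x_n H_{l,n}}$ forces the per-layer eigenvalues to range over a Cartesian product of spectra. First I would reduce to the case where every $H_{l,n}$ is diagonal, exactly as in the reduction preceding Definition~\ref{def:qnn}: in the parallel ansatz one absorbs the diagonalizing unitary $\bigotimes_{n=1}^N U_{l,n}$ into the adjacent parameter layer without changing $f$. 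Indexing the computational basis of $\B^{\otimes RN}$ by multi-indices $\bj = (j^{(1)},\ldots,j^{(N)})$ with $j^{(n)} \in [d]$, the encoding layer then acts diagonally via
\begin{align*}
    S_l(\bx)|\bj\rangle = e^{-i\, \bx \cdot \bLambda_{\bj}^{(l)}}|\bj\rangle, \qquad \bLambda_{\bj}^{(l)} := \left(\lambda_{j^{(1)}}^{(l,1)},\ldots,\lambda_{j^{(N)}}^{(l,N)}\right),
\end{align*}
so that $\bLambda_{\bj}^{(l)} \in \sigma(H_{l,1}) \times \cdots \times \sigma(H_{l,N})$. This is the exact vector analogue of the scalar phase $e^{-ix\lambda_j^{(l)}}$ in the univariate proof.

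Next I would carry out the same telescoping expansion of $U(\bx)|0\rangle$ through the layers. Writing $W^{(l)}|\bj\rangle = \sum_{\bi} W^{(l)}_{\bi, \bj}|\bi\rangle$ and setting $\bj_0 := 0$, iterating the diagonal action above produces
\begin{align*}
    U(\bx)|0\rangle = \sum_{\bj_{L+1}} \sum_{\bj_1, \ldots, \bj_L} \left(\prod_{l=1}^{L+1} W^{(l)}_{\bj_l, \bj_{l-1}}\right) e^{-i\, \bx \cdot \bLambda_{\bj}}|\bj_{L+1}\rangle,
\end{align*}
with $\bLambda_{\bj} := \sum_{l=1}^L \bLambda_{\bj_l}^{(l)} \in \sum_{l=1}^L \bigl(\sigma(H_{l,1}) \times \cdots \times \sigma(H_{l,N})\bigr)$. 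Forming $f(\bx) = \langle 0|U^\dagger(\bx) M U(\bx)|0\rangle$ and grouping all terms that share the same frequency vector yields $f(\bx) = \sum_{\bomega \in \bOmega} c_{\bomega} e^{i\, \bx \cdot \bomega}$ (the sign is immaterial since $\bOmega = -\bOmega$) with
\begin{align*}
    \bOmega = \lset{\bLambda_{\bk} - \bLambda_{\bj} \ \middle\vert\ \bk, \bj} = \Delta \sum_{l=1}^L \bigl(\sigma(H_{l,1}) \times \cdots \times \sigma(H_{l,N})\bigr).
\end{align*}

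Finally I would invoke Lemma~\ref{lemma:minkowski properties}(a) to move the difference operator $\Delta$ inside the Minkowski sum, giving $\bOmega = \sum_{l=1}^L \Delta\bigl(\sigma(H_{l,1}) \times \cdots \times \sigma(H_{l,N})\bigr)$, which is precisely the claimed spectrum. I expect no conceptual obstacle here; the only real care is bookkeeping, namely tracking the $N$-component multi-indices $\bj_l$ consistently through every layer and keeping the Cartesian-product structure of each per-layer spectrum intact, so that the frequency set is identified as a Minkowski sum of $\Delta$ applied to \emph{products} rather than inadvertently flattening it. Every remaining step is a verbatim transcription of the univariate argument under $x \mapsto \bx$ and $\omega \mapsto \bomega$.
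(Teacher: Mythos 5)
Your proposal is correct and follows essentially the same route as the paper's own proof: reduce to diagonal generators by absorbing $\bigotimes_n U_{l,n}$ into the parameter layers, promote indices to multi-indices $\bj \in [d]^N$ so that $S_l(\bx)$ acts diagonally with phase $e^{-i\,\bx\cdot\bLambda^{(l)}_{\bj}}$ where $\bLambda^{(l)}_{\bj}$ ranges over the Cartesian product of spectra, telescope through the layers, group equal frequencies, and shift $\Delta$ inside the Minkowski sum via Lemma~\ref{lemma:minkowski properties}. The paper's proof is a verbatim transcription of the univariate argument in exactly this way, so there is nothing to add.
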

\begin{proof}
    If $\lambda_0^{(l, n)}, ..., \lambda_{d-1}^{(l, n)}$ are the eigenvalues of $H_{l, n}$, we write
    \begin{align*}
        \blambda^{(l)}_{\bj}:= \left(\lambda_{j_1}^{(l, 1)},...,\lambda_{j_N}^{(l, N)}\right)^T \in \sigma(H_{l, 1})\times\ldots \times \sigma(H_{l, N}) \subseteq \R^N
    \end{align*}
    for all $\bj \in \left[d\right]^N$ (recall that $d=2^R$).
   To obtain the representation as a finite Fourier series, we again have to determine the action of $W^{(l)}$ and $S_l(\bx)$ on arbitrary vectors as in the proof of Theorem \ref{theorem: Univariate QNN is Fourier}.

    Let
    \begin{align*}
        |\bj\rangle := |j_1,...,j_N\rangle := \bigotimes_{n=1}^N |j_n\rangle \in \bigotimes_{n=1}^N \BR = \B^{\otimes R \cdot N}
    \end{align*}
    for all $\bj \in \left[d\right]^N$, where $|0\rangle,...,|d-1\rangle\in \BR$ denotes the computational basis.

    The action of $W^{(l)}$ and $S_l(\bx)$ on some basis vector $|\bj\rangle$ for $\bj \in \left[d\right]^N$ are given by
    \begin{align*}
        W^{(l)}|\bj\rangle = \sum_{\bi \in \left[d\right]^N} W_{\bi, \bj}^{(l)} |\bi\rangle
    \end{align*}
    and
    \begin{align*}
       S_l(\bx)|\bj\rangle &= \bigotimes_{n=1}^N e^{-ix_n H_{l, n}}|j_n\rangle \\
       &= e^{-i\sum_{n=1}^N x_n \lambda^{(l, n)}_{j_n}} |\bj\rangle\\
       &= e^{-i \blambda^{(l)}_{\bj} \cdot \bx} |\bj\rangle 
    \end{align*}
    where $ W_{\bi, \bj}^{(l)} := \langle \bi |  W^{(l)}|\bj\rangle \in \C$.
    We obtain the same equations as in the proof of Theorem \ref{theorem: Univariate QNN is Fourier}, we only have to replace each index $j$ by a multiindex $\bj$ and the complex product $\lambda_j^{(l)} \cdot x$ by the scalar product $\blambda_{\bj}^{(l)} \cdot \bx$.

    We therefore obtain
    \begin{align*}
        U(\bx) |0\rangle  = \sum_{\bj^{(1)}, ...,\bj^{(L+1)} \in \left[d\right]^N}\left( \prod_{l=1}^{L+1} W_{\bj^{(l)}, \bj^{(l-1)}}^{(l)} \right) e^{-i\bx \cdot \left(\sum_{l=1}^L\blambda_{\bj^{(l)}}^{(l)}\right)} |\bj^{(L+1)} \rangle,
    \end{align*}
    where $\bj^{(0)} := (0,...,0)$. 
    Hence
    \begin{align*}
        f(\bx) = \langle 0| U^\dagger(\bx) M U(\bx) |0\rangle  = \sum_{\substack{\bk^{(1)}, ...,\bk^{(L+1)} \in \left[d\right]^N \\
        \bj^{(1)}, ...,\bj^{(L+1)} \in \left[d\right]^N}} 
        a_{\bk^{(1)},...,\bk^{(L)}, \bj^{(1)},...,\bj^{(L)}} e^{i\bx \cdot  \left(\sum_{l=1}^L\left(\blambda_{\bk^{(l)}}^{(l)}-\blambda_{\bj^{(l)}}^{(l)}\right)\right)}
    \end{align*}
    with 
    \begin{align*}
        a_{\bk^{(1)},...,\bk^{(L)}, \bj^{(1)},...,\bj^{(L)}} = \sum_{\bk^{(L+1)}, \bj^{(L+1)} \in \left[d\right]^N} \left(\prod_{l=1}^{L+1} W_{\bk^{(l)},\bk^{(l-1)}}^{(l)}\right) \cdot \left(\prod_{l=1}^{L+1} \left(W^\dagger\right)_{\bj^{(l-1)}, \bj^{(l)}}^{(l)}\right) \cdot M_{\bj^{(L+1)}, \bk^{(L+1)}}.
    \end{align*}
    Again, grouping all terms with the same frequencies together yields
    \begin{align*}
        f(x) =\sum_{\bomega \in \bOmega} c_{\bomega} e^{i\bomega \cdot \bx}
    \end{align*}
    with
    \begin{align}
    \bOmega = \sum_{l=1}^L \Delta(\sigma(H_{l, 1})\times\ldots \times \sigma(H_{l, N})).
    \end{align}
\end{proof}
We now show that the parallel and the sequential ansatz lead to the same frequency spectrum.

\begin{theorem}[Multivariate Frequency Spectrum - Sequential Ansatz]
\label{theorem: Multivariate QNN is Fourier - sequential ansatz}
Let  $f(x) =  \langle 0 | U^\dagger (\bx) M U(\bx)|0\rangle$ be a multivariate QNN with a sequential ansatz.
Then 
\begin{align}
    f(\bx) = \sum_{\bomega  \in \bOmega} c_{\bomega} e^{-i \bomega \cdot \bx}
\end{align}
with the frequency spectrum $\bOmega$ being the same as in the corresponding parallel ansatz from Theorem \ref{theorem: Multivariate QNN is Fourier - parallel ansatz}.
\end{theorem}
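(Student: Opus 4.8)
The plan is to reproduce the bookkeeping of the proof of Theorem~\ref{theorem: Univariate QNN is Fourier}, now exploiting that in the sequential ansatz each variable $x_n$ enters only through the single factor $U_n(x_n)$. Writing the product as $U(\bx) = U_N(x_N)\cdots U_1(x_1)$ (the ordering is immaterial for the spectrum), I would expand $U(\bx)|0\rangle$ by applying the factors successively to the computational basis of $\BR$. Each factor $U_n(x_n)$ has precisely the univariate form of Theorem~\ref{theorem: Univariate QNN is Fourier}, so inserting its expansion produces phases $e^{-i x_n \Lambda_n}$ with $\Lambda_n = \sum_{l=1}^L \lambda_{j_l^{(n)}}^{(l,n)} \in \sum_{l=1}^L \sigma(H_{l,n})$. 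Chaining the factors yields a single index path of length $N(L+1)$, split into $N$ blocks, in which the phase carried by $x_n$ depends only on the $L$ eigenvalue indices $j_1^{(n)},\ldots,j_L^{(n)}$ of the $n$-th block; these are free, while only the boundary indices are shared between consecutive blocks.

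Then I would form $f(\bx) = \langle 0|U^\dagger(\bx) M U(\bx)|0\rangle$. Since $U^\dagger(\bx)$ separates the variables in the same way, the $x_n$-dependence of $f$ is entirely a product of factors $e^{i x_n(\Lambda_n' - \Lambda_n)}$ with $\Lambda_n, \Lambda_n' \in \sum_{l=1}^L \sigma(H_{l,n})$. Grouping terms with equal frequency vector $\bomega = (\omega_1,\ldots,\omega_N)$, where $\omega_n := \Lambda_n' - \Lambda_n$, gives the Fourier representation. Because the $\Lambda_n$ and $\Lambda_n'$ can be chosen independently across the blocks, the admissible $n$-th component ranges over the full set $\Delta\sum_{l=1}^L \sigma(H_{l,n})$, so the sequential spectrum is the Cartesian product $\bOmega = \Omega_1 \times \cdots \times \Omega_N$ with $\Omega_n = \Delta\sum_{l=1}^L \sigma(H_{l,n})$.

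Finally, I would reconcile this with the parallel spectrum $\sum_{l=1}^L \Delta(\sigma(H_{l,1})\times\cdots\times\sigma(H_{l,N}))$ of Theorem~\ref{theorem: Multivariate QNN is Fourier - parallel ansatz} via Lemma~\ref{lemma:minkowski properties}: part~(b) rewrites it as $\sum_{l=1}^L (\Delta\sigma(H_{l,1})\times\cdots\times\Delta\sigma(H_{l,N}))$, and part~(c) (applied iteratively) pulls the Minkowski sum inside the Cartesian product to give $\Omega_1\times\cdots\times\Omega_N$, matching the spectrum derived above.

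I expect the only real obstacle to be organizational rather than conceptual: carefully tracking the single long index path with its $N$ eigenvalue blocks and their shared boundary indices, and verifying that the variable separation genuinely makes the phase factorize into one factor per variable, exactly as the univariate argument makes it factorize per layer. Establishing that the block eigenvalue indices are mutually free---hence that every tuple $(\Lambda_1,\ldots,\Lambda_N)$ is realizable---is what secures the full Cartesian product, and after that the two applications of Lemma~\ref{lemma:minkowski properties} are routine.
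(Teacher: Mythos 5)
Your proposal is correct and takes essentially the same route as the paper: the paper likewise expands each block $U_n(x_n)$ using the univariate expansion of Theorem~\ref{theorem: Univariate QNN is Fourier} applied to an arbitrary initial basis state, chains the blocks so that the phase separates as $e^{-i\sum_{n} x_n \Lambda^{(n)}_{\bj^{(n)}}}$ with free eigenvalue indices per block, and then groups terms as in the parallel case. Your final reconciliation of the resulting Cartesian-product spectrum with the parallel expression via Lemma~\ref{lemma:minkowski properties} is precisely the computation the paper carries out in its proof of Theorem~\ref{theorem: frequency spectrum of multivariate QNN is rectangular}.
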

\begin{proof}
   The proof works very similarly to Theorem \ref{theorem: Univariate QNN is Fourier}.
    We use the same notation for the eigenvalues of $H_{l,n}$ as in Theorem \ref{theorem: Multivariate QNN is Fourier - parallel ansatz}.
    The ansatz circuit is of the form
    \begin{align}
        U(\bx) = U_N(x_N) \cdots U_1(x_1).
    \end{align}
    In the proof of Theorem \ref{theorem: Univariate QNN is Fourier} we have seen that
    \begin{align}
        U_n(x_n)|j\rangle = \sum_{j_{L+1}\in [d]} \sum_{\bj \in [d]^L}\left(\prod_{l=1}^{L+1}W_{j_l, j_{l-1}}^{(l)}\right)e^{-ix_n \Lambda_{\bj}^{(n)}} |j_{L+1}\rangle
    \end{align}
    with $j_0:=j$ (we only considered $j=0$ in the proof of Theorem \ref{theorem: Univariate QNN is Fourier}) and $\Lambda_{\bj}^{(n)} := \sum_{l=1}^L \lambda_{j_l}^{(l, n)}$.
    Applying all $U_n(x_n)$ consecutively, $U(\bx)$ has the form
    \begin{align*}
        U(\bx) |0\rangle  
        = \sum_{\bj^{(1)}, ...,\bj^{(N+1)} \in \left[d\right]^L}c_{\bj^{(1)},...,\bj^{(N+1)}} e^{-i \sum_{n=1}^N x_n\Lambda_{\bj^{(n)}}^{(n)}} |j^{(L+1)}_L \rangle.  
    \end{align*}
    The rest follows analogously to Theorem \ref{theorem: Multivariate QNN is Fourier - parallel ansatz}.
\end{proof}
Finally, we proof Theorem \ref{theorem: frequency spectrum of multivariate QNN is rectangular} and show, that the frequency of the multivariate model is just the Cartesian product of the frequency spectra of the corresponding univariate models.
\begin{proof}
    This follows from Lemma \ref{lemma:minkowski properties} about the Minkowski sum.
    \begin{align*}
        \bOmega_{L, N}\left(\left(H_{l, n}\right)_{l, n}\right) 
        &=  \sum_{l=1}^L\Delta(\sigma(H_{l, 1})\times\ldots \times \sigma(H_{l, N})) \\
        &= \sum_{l=1}^L \Delta(\sigma(H_{l, 1}))\times\ldots \times \Delta(\sigma(H_{l, N})) \\
        &= \sum_{l=1}^L \Delta(\sigma(H_{l, 1}))\times\ldots \times \sum_{l=1}^L\Delta(\sigma(H_{l, N})))\\
        &= \Omega_{L}\left(H_{1, 1},...,H_{L, 1}\right) \times \cdots \times \Omega_{L}\left(H_{1, N},...,H_{L, N}\right).
    \end{align*}
\end{proof}

\section{Maximality Proofs for 2-Dimensional Sub-Generators}
\label{appendix: maximality proofs 2dim}
The following purely number theoretic lemma is necessary to prove the maximality results.
\begin{lemma}
\label{lemma: uniqueness of the solution}
    Let $R,L \in \N$ and $z_1,...,z_R \in \R_{\geq 0}$ with $z_1 \leq ... \leq z_R$ such that
    \begin{align}
    \label{eq: set_equation number theoretic lemma}
    \sum_{r=1}^R z_r \cdot \Z_L = \Z_{\frac{(2L+1)^R-1}{2}}.
    \end{align}
    The unique solution in $z_1,...,z_R$ is given by
    \begin{align*}
        z_r = (2L+1)^{r-1} 
    \end{align*}
for all $r = 1,...,R$.
\end{lemma}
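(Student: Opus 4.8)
The plan is to argue by induction on $R$, after first extracting the structural consequences of the set identity. Write $M := \frac{(2L+1)^R-1}{2}$, so the right-hand side is $\Z_M$ with $|\Z_M| = 2M+1 = (2L+1)^R$. First I would record the elementary reductions. Since each $z_r$ equals $\sum_{r'} z_{r'} j_{r'}$ for the tuple with $j_r = 1$ and all other entries $0$, every $z_r$ lies in $\Z_M \subseteq \Z$, so the $z_r$ are nonnegative integers; moreover none can vanish, for otherwise the Minkowski sum would have at most $(2L+1)^{R-1} < (2L+1)^R$ elements. Comparing these two cardinalities also shows that the map $\lset{-L,\ldots,L}^R \to \Z$, $(j_1,\ldots,j_R) \mapsto \sum_r z_r j_r$, is a bijection onto $\Z_M$; in particular all representations are distinct (non-degeneracy), and evaluating at the all-$L$ tuple (legitimate since each $z_r \ge 0$) gives the identity $L\sum_{r=1}^R z_r = M$.

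Next I would prove the divisibility statement $z_r \mid (2L+1)^{R-1}$ for every $r$ by a counting argument modulo $z_r$. Fixing $r$ and writing $S^{(r)} := \sum_{r'\neq r} z_{r'}\Z_L$, the bijection exhibits $\Z_M$ as the disjoint union of the $2L+1$ translates $S^{(r)} + z_r j$, $j = -L,\ldots,L$. All elements of a fixed translate share the residues mod $z_r$ of $S^{(r)}$, so the number of elements of $\Z_M$ in any fixed residue class mod $z_r$ is a multiple of $2L+1$. On the other hand, in an interval of consecutive integers the residue-class counts differ by at most one, and two multiples of $2L+1 \ge 3$ cannot differ by exactly one; hence all class counts are equal, which forces $z_r \mid (2L+1)^{R-1}$.

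With these facts in hand I would run the induction. The base case $R=1$ is immediate: $z_1\Z_L = \Z_L$ together with $z_1 L = L$ gives $z_1 = 1$. For the inductive step I single out the largest generator $z_R$ and study the tiling of $\Z_M$ by the translates $S^{(R)} + z_R j$. A generating-function (cyclotomic) computation, using $z_R \mid (2L+1)^{R-1}$, shows that the tile $S^{(R)}$ must be \emph{block-periodic}: a disjoint union of $m := (2L+1)^{R-1}/z_R$ blocks of $z_R$ consecutive integers, spaced $z_R(2L+1)$ apart. The goal is to show $m=1$, i.e. that $S^{(R)}$ is the single symmetric interval $\Z_{\frac{(2L+1)^{R-1}-1}{2}}$ and $z_R = (2L+1)^{R-1}$. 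Once this is known, $S^{(R)} = \sum_{r<R} z_r\Z_L$ is again of the required form with $R-1$ generators, and the induction hypothesis yields $z_r = (2L+1)^{r-1}$ for $r<R$, completing the proof.

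The main obstacle is precisely the step $m=1$, that is, ruling out the split (multi-block) tilings. The ordering $z_1 \le \cdots \le z_R$ gives $L\sum_{r<R} z_r \le L(R-1)z_R$, which bounds the diameter $(m-1)z_R(2L+1)+(z_R-1)$ of $S^{(R)}$ and forces $m \le R-1$; combined with $m \mid (2L+1)^{R-1}$ (so every prime factor of $m$ is $\ge 3$) this already settles $R \le 3$. For general $R$ the remaining possibilities $m\ge 3$ must be excluded by feeding the block structure of $S^{(R)}$ back into the recursion, showing that a genuinely multi-block set cannot be realized as $\sum_{r<R} z_r\Z_L$ with nondecreasing generators bounded by $z_R$; I expect controlling how the block structure propagates down the induction to be the delicate part, whereas the reductions and the modular divisibility argument are routine by comparison. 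A cleaner finish is available when $2L+1$ is prime: the divisors of $(2L+1)^{R-1}$ are then exactly its powers, so each $z_r$ is a power of $2L+1$, and the identity $\sum_r z_r = \sum_{k=0}^{R-1}(2L+1)^k$ expresses the base-$(2L+1)$ digit sum $R$ as a sum of exactly $R$ powers; minimality of the digit representation forces the exponents to be $0,1,\ldots,R-1$, giving $z_r = (2L+1)^{r-1}$.
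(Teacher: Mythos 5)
Your proposal is a proof plan whose hardest step is left open, and you acknowledge this yourself. The preliminary reductions (integrality, $z_r \geq 1$, non-degeneracy, $L\sum_r z_r = M$), the divisibility $z_r \mid (2L+1)^{R-1}$ via residue-class counting, and the generating-function conclusion that the tile $S^{(R)}$ must be a union of $m = (2L+1)^{R-1}/z_R$ blocks of $z_R$ consecutive integers are all correct. But the crux is precisely $m=1$, and what you prove — $m \leq R-1$ together with $m \mid (2L+1)^{R-1}$ — only settles $R \leq 3$, plus the case of prime $2L+1$ by the separate digit-sum argument. For, say, $R=4$ and $2L+1 = 9$ the value $m=3$ satisfies both constraints and nothing in the proposal excludes it. The promised fix ("feeding the block structure back into the recursion") is not carried out, and it is not routine: when $m \geq 3$ the sub-sum $\sum_{r<R} z_r \Z_L$ would have to equal a \emph{disconnected} block set, which is a statement of a different shape than the induction hypothesis (which only concerns sumsets equal to a symmetric interval $\Z_{\frac{(2L+1)^{R-1}-1}{2}}$), so the induction as set up does not apply to it; a new structural argument would be needed. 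This is a genuine gap, not a presentational one.

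The gap is avoidable, and the paper's own proof shows how: work upward from the \emph{smallest} generator instead of downward from the largest. After the shift $\Z_L = [2L+1] - L$, the hypothesis becomes $\sum_{r=1}^R z_r \cdot [2L+1] = \left[(2L+1)^R\right]$, and counting forces all sums $\sum_r z_r s_r$, $s_r \in [2L+1]$, to be pairwise distinct. Then a greedy induction pins down each generator with no tiling analysis at all: $z_1 = 1$ (else $1$ is unrepresentable); and once $z_1, \ldots, z_{r-1}$ are known to be $1, 2L+1, \ldots, (2L+1)^{r-2}$, their sums cover exactly $\left[(2L+1)^{r-1}\right]$, so distinctness forces $z_r \geq (2L+1)^{r-1}$, while representability of the integer $(2L+1)^{r-1}$ itself forces $z_r \leq (2L+1)^{r-1}$. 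Replacing your top-down induction by this bottom-up argument closes the gap; your divisibility lemma and block-structure analysis, while correct, then become unnecessary.
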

\begin{proof}
   One can easily check that $z_r = (2L+1)^{r-1}$ for all $r = 1,...,R$ is indeed a solution of the problem. 

   To prove the uniqueness, we first show that $\sum_{r=1}^R z_r \cdot \Z_L = \Z_{\frac{(2L+1)^R-1}{2}}$ if and only if 
   \begin{align}
       \sum_{r=1}^R z_r \cdot [2L+1] = \left[(2L+1)^R\right].
   \end{align}
    Since all $z_1,...,z_R$ are non negative, one has
    $T:=\sum_{r=1}^R z_r \cdot L = \frac{(2L+1)^R-1}{2}$ as it is the maximal element of both sides.
   Then
   \begin{align}
      \sum_{r=1}^R z_r \cdot [2L+1] = \sum_{r=1}^R z_r \cdot (\Z_L + L) = \left(\sum_{r=1}^R z_r \Z_L \right) + T = \Z_{\frac{(2L+1)^R-1}{2}} + T = \left[(2L+1)^R\right].
   \end{align}
   The opposite direction is analogue. Hence we consider the equation $\sum_{r=1}^R z_r \cdot [2L+1] = \left[(2L+1)^R\right]$ from now on. 
   The set on the left hand side has maximally size
   \begin{align}
       \Big|\sum_{r=1}^R z_r \cdot [2L+1]\Big| \leq (2L+1)^R, 
   \end{align}
    thus it could only be equal $\left[(2L+1)^R\right]$ if all sums $\sum_{r=1}^R z_r \cdot s_r \in \sum_{r=1}^R z_r \cdot [2L+1]$ are pairwise different.
    Further, all $z_1,\ldots, z_R \in \N_0$, since
    \begin{align}
        z_r \in \sum_{r=1}^R z_r \cdot [2L+1] = \left[(2L+1)^R\right] \subseteq \N_0.
    \end{align}
   As a consequence $z_1 \geq 1$, else 
   \begin{align}
       0 = 0\cdot z_1 + \ldots + 0 \cdot z_R = 1 \cdot z_1 + 0 \cdot z_2 + \ldots 0 \cdot z_R
   \end{align}
   and the sums would therefore not be pairwise different. We can conclude further that $z_1 = 1$ from $1 \leq z_1 \leq \ldots \leq z_R$, else
   \begin{align}
       1 < z_1 \leq s_1 \cdot z_1 + \ldots + s_R \cdot z_R
   \end{align}
   if at least one coefficient is non zero, contradicting $1 \in \left[(2L+1)^R\right] = \sum_{r=1}^R z_r \cdot [2L+1]$.
   Since $z_1 = 1$ and $s = z_1 \cdot s$ for all $s \in [2L+1]$, we conclude that $z_2 \geq 2L+1$, again by uniqueness of the sums. By the same reasons, $z_2 > 2L+1$ would lead to a contradiction because then 
   \begin{align}
       2L+1 < z_2 \leq s_1 \cdot z_1 + \ldots + s_R \cdot z_R
   \end{align}
   if at least one of $s_2,\ldots s_R$ is non zero and therefore
   $2L+1 \not \in \sum_{r=1}^R z_r \cdot [2L+1] =  \left[(2L+1)^R\right]$.
   Repeating this argument, one concludes $z_r=(2L+1)^{r-1}$. 
\end{proof}

With the previous lemma we are able to prove Theorem \ref{theorem: maximal frequency spektrum equal data encoding layers}.
\begin{proof}
By Theorem \ref{theorem: Univariate QNN is Fourier} and Lemma \ref{lemma:basic properties kronecker sum}, the frequency spectrum $\Omega$ is given by
\begin{align}
    \Omega
        = \sum_{l=1}^L\sum_{r=1}^R\Delta \sigma(H_r) 
        = \sum_{l=1}^L\sum_{r=1}^R(\lambda_r-\mu_r) \cdot \Z_1 
        = \sum_{r=1}^R(\lambda_r-\mu_r) \cdot \Z_L 
\end{align}
    The upper bound can be derived from counting elements
    \begin{align*}
        |\Omega| 
        &\leq (2L+1)^R,
    \end{align*}
    which yields $K \leq \frac{(2L+1)^R -1}{2}$ by the symmetry $\Omega = - \Omega$ and $0 \in \Omega$. By defining $z_r := \lambda_r - \mu_r$ for all $r = 1,...,R$, the rest is the exact statement of Lemma \ref{lemma: uniqueness of the solution}.
\end{proof}
Given the proof for equal data encoding layers, we can extend the results to arbitrary encoding layers and prove Theorem \ref{theorem: maximal frequency spektrum, arbitrary S}.
\begin{proof}
    By Theorem \ref{theorem: Spectral Invariance Under Area-Preserving Transformations}, each QNN of shape $(R, L)$ with $2$-dimensional generators has the same frequency spectrum as a single layer model ($L'=1$) with $2$-dimensional generators and $R':= R\cdot L$ many qubits.
    For a single layer model, Theorem \ref{theorem: maximal frequency spektrum equal data encoding layers} 
    states that
    \begin{align*}
        K\leq \frac{(2L'+1)^{R'} -1}{2} = \frac{3^{R\cdot L}-1}{2}
    \end{align*}
    with equality if and only if $\lset{\lambda_r^{(l)} - \mu_r^{(l)}|\ r, l} = \lset{3^0,3^1,\ldots, 3^{R'-1}}$.
    Note that the bijection 
    \begin{align}
        [R]\times [L] & \rightarrow [RL] \\
        (r, l) & \mapsto l-1 + L \cdot (r-1)
    \end{align}
    is arbitrarily chosen and can be replaced by any other bijection.
\end{proof}

\section{Maximality Proofs for Arbitrary Dimensional Sub-Generators}
\label{appendix: maximality proofs arbitrary dim}
\subsection{Golomb Encoding}
We give a proof of Theorem \ref{thm: maximality with Golomb Ruler}.
\begin{proof}
By Theorem \ref{theorem: Spectral Invariance Under Area-Preserving Transformations} it is sufficient to prove the statement for single layer QNNs.
     The frequency spectrum of any univariate single layer QNN with no further restrictions to the data encoding layers and $k=2^q$-dimensional generators $H_{r} \in \End{\B^{\otimes q}}$ is given by
     \begin{align}
         \Omega := \Omega_1\left(\bigoplus_{r=1}^{\nicefrac{R}{q}}H_r\right)
         =\sum_{r=1}^{\nicefrac{R}{q}} \Delta \sigma(H_r).
     \end{align}
     An upper bound to the size of the frequency spectrum is given by
     \begin{align}
         |\Omega| \leq |\Delta \sigma(H_r)|^{\nicefrac{R}{q}}
         \leq \left(2\binom{k}{2}+1 \right)^{\nicefrac{R}{q}}
         = \left(4^q - 2^q + 1\right)^{\nicefrac{R}{q}},
     \end{align}
     where equality holds if and only if $\sigma(H_r)$ is a Golomb ruler of order $k$ for all $r=1,...,\nicefrac{R}{q}$ and each sum $\sum_{r=1}^{\nicefrac{R}{q}} \omega_r \in \sum_{r=1}^{\nicefrac{R}{q}} \Delta \sigma(H_r)$ is unique. 
     
     We explicitly construct a set of generators $H_r$ such that these conditions are satisfied. Let $H \in \End{\B^{\otimes q}}$ such that $\sigma(H)\subseteq \Z$ is a Golomb ruler. Choose any appropriate $\beta$ and let $H_r:= \beta_r \cdot H$ be as in the Golomb encoding. Then all $\sigma(H_r)$ are Golomb rulers since $\sigma(H_r) = \beta_r \cdot \sigma(H)$ and Golomb rulers are independent from non zero scaling. Now assume $\sum_{r=1}^{\nicefrac{R}{q}} \beta_r\omega_r  = \sum_{r=1}^{\nicefrac{R}{q}} \beta_r\omega_r'$
     with $w_r, w_r' \in \Delta \sigma(H)$ for all $r=1,...,\nicefrac{R}{q}$. Re-arranging the sums yields
     \begin{align*}
         \beta \cdot \left(\sum_{r=2}^{\nicefrac{R}{q}} \beta^{r-2}(\omega_r-\omega_r')\right) = \omega_1' - \omega_1.
     \end{align*}
     Since $|\omega_1' - \omega_1|\leq 2 \ell (\sigma(H))$, $\beta > 2 \ell(\sigma(H))$ and all summands are integers, this is only possible if $\omega_1 = \omega_1'$ and $\sum_{r=2}^{\nicefrac{R}{q}} \beta^{r-2}(\omega_r-\omega_r') = 0$. Repeating this argument inductively, this yields $w_r = w_r'$ for all $r=1,...,\nicefrac{R}{q}$. Hence the sums are unique and therefore the $H_r$ build a set of generators with maximal frequency spectrum as mentioned.
\end{proof}
\subsection{Relaxed Turnpike Encoding}
Next, we prove Theorem \ref{theorem: algorithmic solution relaxed turnpike problem}.
\begin{proof}
    Let $S =\lset{s_1,...,s_d} \subseteq \Z$ with $s_1 < ...<s_d$ be a solution of the relaxed turnpike problem. Given that $\Delta S = \Delta S'$ for $S' = \{0, s_2 - s_1, \ldots, s_d - s_1\}$, we may assume without loss of generality that $s_1 = 0$ and $s_i \geq 0$ for all $i \in \{1, \ldots, d\}$.  Suppose there exists an index $j \in \{1, \ldots, d-1\}$ for which $s_{j+1} - s_j > \binom{d}{2}$.  We then partition $S$ into two subsets:
    \begin{align*}
        A &:= \lset{s_1,...,s_j} \\
        B &:= \lset{s_{j+1},...,s_d}.
    \end{align*}
    Let $K:=K(S)$.
    For every element $e \in A \Delta B$ it holds
    \begin{align}
        |e| \geq s_{j+1} - s_j > \binom{d}{2} \geq K.
    \end{align}
    The assumption
    \begin{align}
        k \in \Delta S = \Delta A \cup \Delta B \cup A \Delta B
    \end{align}
    implies that $k \in \Delta A$ or $k \in \Delta B$ for all $k \in \{1, \ldots, K\}$. Define $c:= s_{j+1}-s_j - K - 1 \geq 0$ and set
    \begin{align*}
        B' &= \lset{s_{j+1}',...,s_d'}:= B-c = \lset{s_{j+1} - c,...,s_d-c}, \\
        S' &:= A \cup B'.
    \end{align*}
    Because $\Delta B' = \Delta B$, the set $\Delta S'$ also contains all integers $k=1,\ldots,K$. However, it would also contain $K+1=s_{j+1}'-s_j$, contradicting the premise that $S$ is a solution to the relaxed turnpike problem where $K$ is maximal over all sets of size $d$. Thus, our assumption that there exists an index $j$ with $s_{j+1}-s_j > \binom{d}{2}$ must be false. Consequently, the solution $S$ to the turnpike problem must be an element of the candidate set $C$.
\end{proof}
We give a proof of Theorem \ref{thm: maximality with turnpike}.
\begin{proof}
    For any univariate single layer QNN with one arbitrary generator $H\in \End{\BR}$, the frequency spectrum is given by 
    \begin{align*}
        \Omega = \Delta \sigma(H).
    \end{align*}
    By definition $K$ is maximal if and only if $\sigma(H)$ is a solution of the turnpike problem (cases in which eigenvalues occur twice can be excluded directly).
\end{proof}
Finally, we can prove Theorem \ref{thm: maximality with turnpike arbitrary k and L}
\begin{proof}
    Again, the frequency spectrum $\Omega$ is given by
    \begin{align}
        \Omega = \sum_{r=1}^{\nicefrac{R}{q}} \beta_r \Delta \sigma(H).
    \end{align}
    Each $k \in \lset{0,\ldots, \frac{(2K+1)^{\nicefrac{R}{q}}-1}{2}}$ can be (uniquely) written as
    $k=\sum_{r=1}^{\nicefrac{R}{q}} k_r (2K+1)^{r-1}$ with $k_r \in \lset{-K,\ldots,K} \subseteq \Delta \sigma(H)$, thus $k \in \Omega$.
\end{proof}
\end{appendices}


\bibliography{references}

\end{document}